\newcommand{\rev}[1]{{\color{blue}#1}}
\newcommand{\rev}[1]{#1}
\newcommand{\revo}[1]{{\color{blue}#1}}
\newcommand{\revo}[1]{#1}
\newcommand{\revoo}[1]{{\color{purple}#1}}
\newcommand{\revoo}[1]{#1}
\newcommand{\wei}[1]{{\color{purple}#1}}
\newcommand{\wei}[1]{#1}
\newcommand{\weip}[1]{{\color{purple}#1}}
\newcommand{\weip}[1]{#1}
\newcommand{\weipp}[1]{{\color{blue}#1}}
\newcommand{\weipp}[1]{#1}
\newtheorem{thm}{Theorem}
\newtheorem{pps}{Proposition}
\newtheorem{asp}{Assumption}
\newtheorem{coro}{Corollary}
\begin{document}
%%%%%%%%%%%%%%%%%%%%%%%%%%%%%%%%%%%%%%%%%%%%%%%%%%%%%%%%%%%%%%%%%%%

\title{Spectrum Investment under Uncertainty: \\A Behavioral Economics Perspective}

\author{\IEEEauthorblockN{Junlin Yu,~\IEEEmembership{Student Member,~IEEE}, Man Hon Cheung, and Jianwei Huang,~\IEEEmembership{Fellow,~IEEE}}\\
%\IEEEauthorblockA{Department of Information Engineering, \\
%The Chinese University of Hong Kong, Hong Kong, China\\
%E-mail: \{yj112, mhcheung, jwhuang\}@ie.cuhk.edu.hk}
\thanks{Manuscript received May 1, 2016; revised August 5, 2016; accepted August 29, 2016.
This work is supported by the General Research Fund (CUHK 14202814) established under the University Grant Committee of the Hong Kong Special Administrative Region, China. Part of this paper was presented in \cite{yu2014spectrum}.}
\thanks{\revoo{J. Yu, M. H. Cheung, and J. Huang are with the Department of Information Engineering, the Chinese University of Hong Kong, Hong Kong, China;
Emails: \{yj112, mhcheung, jwhuang\}@ie.cuhk.edu.hk.}}
%\thanks{Part of this paper was presented at \emph{IEEE International Conference on
%Communications} (\emph{ICC'14}), Sydney, Australia, June 2014.
%The major differences between this journal submission and the conference version are as
%follows: 
%a) New system model: In this journal version, we consider all three features in the modeling in prospect theory, namely s-shaped value function, reference point and probability distortion. In contrast, in our conference version, we only considered the feature of s-shaped value function. 
%b) New analytical results: In the conference version, we only analyzed the optimal decision under a uniform distribution of sensing realization factor $\alpha$. In this journal version, we extend the analysis to a general distribution of sensing realization factor (see Theorem 1). In theorem 2, we extend the the analysis by showing the relationship between sensing decisions and risk averse parameters. New analytical results are added in Theorems 4 and 5 regarding the impact of reference point and probability distortion. 
%c) Performance Evaluation: In this journal version, we have considered the optimal investment decision under different probability distortions and reference points under both risk seeking and risk averse cases, which were not considered in the conference version. 
%To summarize, the difference between the journal version and the conference version is more than $40\%$.}
}

% make the title area
\maketitle

% To remove the p.1 in the abstract page
\thispagestyle{empty}

\vspace{-2cm}

%\section*{abstract}
\begin{abstract}
In this paper, we study a virtual wireless operator's spectrum investment problem under spectrum supply uncertainty. To obtain enough spectrum resources to meet its customer demands, the virtual operator can either sense for the temporarily unused spectrum in a licensed band, or lease spectrum from a spectrum owner. Sensing is usually cheaper than leasing, but the amount of available spectrum obtained by sensing is uncertain due to the primary users' activities in the licensed band.
Previous studies on spectrum investment problems mainly considered the expected profit maximization problem of a risk-neutral operator based on the expected utility theory (EUT).
In reality, however, an operator's decision is influenced by not only the consideration of expected profit maximization, but also the level of its risk preference. 
To capture this tradeoff between these two considerations, we analyze the operator's optimal decision problem using the prospect theory from behavioral economics, which includes EUT as a special case.
The sensing and leasing optimal problem under prospect theory is non-convex and challenging to solve. Nevertheless, by exploiting the unimodal structure of the problem, we are able to compute the unique global optimal solution.
\revoo{We show that comparing to an EUT operator, both the risk-averse and risk-seeking operator achieve a smaller expected profit. On the other hand, a risk-averse operator can guarantee a larger minimum possible profit, while a risk-seeking operator can achieve a larger maximum possible profit. Furthermore, the tradeoff between the expected profit and the minimum possible profit for a risk-averse operator is better when the sensing cost increases, while the tradeoff between the expected profit and the maximum possible profit for a risk-seeking operator is better when the sensing cost decreases.}
\end{abstract}

\begin{IEEEkeywords}
Prospect theory, expected utility theory, spectrum trading, spectrum sensing.
\end{IEEEkeywords}

%\newpage

% For peer review papers, you can put extra information on the cover
% page as needed:
% \ifCLASSOPTIONpeerreview
% \begin{center} \bfseries EDICS Category: 3-BBND \end{center}
% \fi
%
% For peerreview papers, this IEEEtran command inserts a page break and
% creates the second title. It will be ignored for other modes.
%\IEEEpeerreviewmaketitle

%=====================================================
\section{Introduction} \label{sec:intro}
%=====================================================
\subsection{Background and Motivation}
\IEEEPARstart{T}{he} business model of \emph{virtual operator}\footnote{Here we focus on a ``virtual operator'' in the wireless industry, which is also referred to as the ``mobile virtual network operator (MVNO)'' in the literature.} has achieved significant success worldwide in recent years \cite{link2}. According to a recent market report published by Transparency Market Research, the global virtual operator market is expected to expand at an annual rate of $7.4$\% and reach a value of US\$ 75.25 billions by 2023 \cite{link3}. 
As a virtual operator (e.g., Consumer Cellular in the US \cite{consumercelluar}) does not own any licensed spectrum, it needs to acquire spectrum from a \emph{spectrum owner} (e.g., AT\&T) in order to provide services to its customers. As virtual operators often rely on leased network infrastructure instead of building and maintaining their own infrastructure, their investment and operational costs are usually lower than the spectrum owners. Such an advantage often enables them to provide cheaper and more flexible data plans to their customers, hence reaching niche markets that are underserved by the spectrum owners \cite{cisco1}.
	
  Motivated by the recent development of cognitive radio technology and dynamic spectrum sharing, a virtual operator can acquire spectrum in two different ways: spectrum sensing and spectrum leasing. 
  With \emph{spectrum sensing} \cite{4796930,4489760}, a virtual operator detects the temporarily unused spectrum in a licensed band, and uses it to provide services to its customers as long as such operator does not cause any harmful interferences to the primary (licensed) customers of the spectrum owner.
  With \emph{spectrum leasing} \cite{6123780,6042865}, a spectrum owner explicitly allows the virtual operator to operate over a given licensed band during a given period of time with a leasing fee.  
  In this paper, we will consider a hybrid spectrum investment scheme involving both approaches. 

  %There are many papers that studied the spectrum sensing and leasing decisions of virtual operator.
%  The \emph{expected utility theory} (EUT) is widely used in the literature to study the spectrum investment decisions of a virtual operator under \emph{uncertainty}.
  The key feature of the problem is the uncertainty of spectrum acquisition through spectrum sensing, as \weipp{the} virtual operator does not know the activity levels of the primary customers beforehand. When facing uncertainty, most prior studies of spectrum investment applied the \emph{expected utility theory} (EUT) to compute the operator's optimal decisions (e.g., \cite{kasbekar2010spectrum,gao2011spectrum,jin2012spectrum,duan_ia11}). 
  In these models, a (virtual) operator optimizes the decisions to maximize its expected profit.
  Such an EUT model, however, does not fully capture the rather complicated decision process obtained in our daily life, and hence may have a poor predication power \cite{kahneman_pt79}.  
  \rev{Alternatively, the Nobel-prize-winning \emph{prospect theory} (PT) (e.g., \cite{kahneman_pt79, tversky_ai92, kahneman_cv00}), which establishes a more general model than the EUT, provides a psychologically more accurate description of the decision-making under uncertainty. }
  %For example, we consider a two-option choices with A: to win $\$100$ and B: to win $\$0$ and $\$200$ each with a probability $50\%$. Empirical data shows that although A and B have the same expected profit, most people would like to choose option A [ref].
PT\footnote{\weipp{To better understand PT, consider the following two lottery settings. Lottery A1: 50\% to win \$200, and 50\% to win \$0; Lottery A2: 100\% to win \$100. Experimental results \cite{kahneman_pt79} showed that most people prefer Lottery A2 to A1. The result reflects that people are risk averse, which we will discuss in more details later in this section. Next we further consider another two lottery settings. Lottery B1: 1\% to win \$99, and 99\% to loss \$1; Lottery B2: 100\% to win \$0. Experimental results \cite{kahneman_pt79} showed that most people prefer Lottery B1 to B2. The result reflects that people will have a subjective probability distortion of small probability events, which we will also introduce later in this section.}} incorporates three main factors in the modeling: 
  (1) \emph{Impact of reference point}: A decision maker evaluates an option based on the potential gains or losses with respect to a reference point, and the choice of the reference point significantly affects the valuation.  %but not on the magnitude of the utility of the option. 
  (2) The \emph{s-shaped asymmetrical value function}: \rev{A decision maker often experiences a diminishing marginal utility when evaluating a gain, and a diminishing marginal disutility when evaluating a loss. Furthermore, it often prefers avoiding losses than achieving gains.} As a result, the value function is s-shaped and asymmetrical: concave in the gain regime, convex in the loss regime, and steeper for losses than for gains (see Fig. \ref{fig:valuepdfun}(a) in Section III-D for a concrete example).
  (3) \emph{Probability distortion}: A decision maker tends to overreact to small probability events, but underreact to medium and large probability events (see Fig. \ref{fig:valuepdfun}(b) in Section III-D for a concrete example). \revoo{This characteristic is useful in explaining behaviors related to lottery and insurance \cite{kahneman_pt79}, where people usually purchase lottery and insurance at prices higher than the expected returns.}
As PT fits better into the reality than EUT based on many empirical studies, researchers and practitioners have applied PT in many areas, such as understanding the behavior of investment agents in finance \cite{jin_bp08} and the effort and wage levels of workers and firms in labor markets\footnote{\weipp{We note that prospect theory is not a theory only about ``irrational behaviors". Instead, it is about how decision makers decide the tradeoff between the maximum/minimum possible profit and the expected profit (in our context), which applies to professionals as well. In fact, there have been various studies (e.g., \cite{dichtl2011portfolio,berejikian2002model}) that focus on the professionals' decisions problems based on prospect theory in areas such as finance and politics.}} \cite{camerer2004behavioral}. However, \weipp{there isn't any} existing work of using PT to understand the spectrum investment behaviors in today's wireless market.

%%%%%%%%%%%%%%%%%%%%%%%%%%%%%%%%%%%%%%%%%%%%%%%%%%%%%%%%%%%%%%%%%%%%%%%%%%%%%%%%%%%%%%%%%%%%%%%%%%%%%%%%%%%%%%%%%%%%%%%
%
%
%
%
%
%
%%%%%%%%%%%%%%%%%%%%%%%%%%%%%%%%%%%%%%%%%%%%%%%%%%%%%%%%%%%%%%%%%%%%%%%%%%%%%%%%%%%%%%%%%%%%%%%%%                
  %In this paper, we study the strategy of a virtual operator in terms of sensing decision and leasing decision under Prospect theory.
  %We study the spectrum sensing and leasing decisions of a virtual operator under sensing uncertainty based on PT. 
  \subsection{Key Results and Contributions}
  
 In this paper, we study the spectrum sensing and leasing decisions of a virtual operator under sensing \emph{uncertainty}, and formulate it as a two-stage sequential optimization problem.
  In Stage I, the virtual operator determines the optimal amount of licensed spectrum to \emph{sense}.
  Due to the stochastic nature of the primary licensed customers' traffic, the amount of available spectrum obtained through sensing is a random variable. 
  If the spectrum obtained through sensing is not sufficient to satisfy its customers' demand, the virtual operator will \emph{lease} some additional spectrum from the spectrum owner in Stage II.
  
 Under this sensing \emph{uncertainty}, we can model the decision making of a risk-neutral operator, who aims to maximize its expected profit by EUT. 
  However, in reality, a decision maker is rarely risk-neutral. Besides aiming to achieve a high expected profit, it is often affected by its own risk preference. 
  To be more specific, a risk-seeking decision maker is aggressive and wants to achieve a high profit even with a high risk, while a risk-averse decision maker is conservative and wants to guarantee a satisfactory level of minimum possible profit.
  In order to capture this tradeoff between the expected profit and risk preference, we apply the PT to study the optimal sensing and leasing decisions. It leads to a \emph{non-convex} optimization problem, which is very challenging to solve. Nevertheless, by exploiting the unimodal structure of the problem, we can obtain the \emph{global optimal} solution analytically.

	%Our key results and contributions are summarized as follows:
	Our key contributions are summarized as follows:

\begin{itemize}

	\item \emph{Behavioral economics modeling of virtual operator's investment decision under uncertainty}: \revoo{We model a virtual operator's investment decisions under sensing uncertainty using PT, which captures the tradeoff between the expected profit maximization and risk preference. We characterize the feature of a risk-averse operator (who is most concerned of potential losses) and a risk-seeking operator (who is most concerned of potential gains).}
	
	\item \emph{Characterization of the unique optimal solution of the non-convex decision problem}: Despite the non-convexity of the spectrum sensing problem, we characterize the uniqueness of the optimal solution and compute it numerically. We further evaluate how different behavioral characteristics (i.e., reference point, probability distortion, and s-shaped valuation) affect this optimal solution.
	
	\item \emph{Engineering insights based on comparison between EUT and PT}:  
	We show that a risk-averse operator can achieve a better tradeoff between the expected profit and minimum possible profit in the high sensing cost scenario than the low sensing cost scenario. The result for the risk-seeking operator is exactly the opposite. 
	%Moreover, an operator will become more risk-seeking if the probability of high sensing realization decreases.

\end{itemize}

  Next we will review the literature in Section II. In Section \ref{sec:model}, we introduce the spectrum investment model and formulate the sequential optimization problem. In Section \ref{sec:twostage}, we compute the global optimal solution of the non-convex optimization problem, and discuss various engineering insights derived from such a solution. In Section \ref{sec:pdrp}, we illustrate the impact of probability distortion and reference point by considering the special case of binary sensing outcomes. In Section \ref{sec:pe}, we provide simulation results to evaluate the sensitivity of the optimal decision with respect to several model parameters. We conclude the paper in Section \ref{sec:concl}.

%==============================================================================
\section{Literature Review} \label{sec:review}
%==============================================================================
\subsection{Expected Profit Maximization in Spectrum Investment Using Expected Utility Theory}
Spectrum investment problem under uncertainty has been studied extensively through expected profit maximization using EUT (e.g., \cite{kasbekar2010spectrum,gao2011spectrum,jin2012spectrum,duan_ia11}).
\revoo{
Kasbekar and Sarkar in \cite{kasbekar2010spectrum} considered a spectrum auction problem under the uncertainty of the number of secondary customers. }
	Gao \emph{et al.} in \cite{gao2011spectrum} studied the spectrum contract between a primary spectrum owner and the secondary customers considering the uncertainty of the customer types.
	In \cite{jin2012spectrum}, Jin \emph{et al.} presented an insurance-based spectrum trading problem between a primary spectrum owner and the secondary customers, where the uncertainty also comes from the types of the customers. 
	Duan \emph{et al.} in \cite{duan_ia11} considered the spectrum investment of a virtual operator under the spectrum sensing uncertainty. 
	In all the above studies based on EUT, a decision maker aims to maximize the \weipp{weighted} average of its utilities under different outcomes, which does not fully capture the realistic human decision behaviors examined in several well known psychological studies in the past few decades \cite{kahneman_pt79, tversky_ai92, kahneman_cv00}. Thus, in this paper we apply the more general PT, which takes into account both the expected payoff and risk preference in the human decision making.

\subsection{Resource Allocation in Communication Networks and Smart Grids Using Prospect Theory}
\revoo{The study of resource allocation in communication networks and smart grids based on PT only emerged recently.} The first paper is due to Li \emph{et al.} in \cite{li_pi12}, which compared the equilibrium strategies of a two-user random access game under EUT and PT.
	Yang \emph{et al.} in \weipp{\cite{yang2014impact}} considered the impact of end-user decision-making on wireless resource pricing, when there is an uncertainty in the quality of service (QoS) guarantees relying on PT.
	Several other recent studies applied PT to study the decision making in smart grid systems. 
In \cite{wang2014integrating}, Wang \emph{et al.} formulated a non-cooperative game among consumers in an energy exchange system. They applied PT to explicitly account for the users' subjective perceptions of their expected utilities. 
    Xiao \emph{et al.} in \cite{xiao2015prospect} studied the static energy exchange game among microgrids that are connected to a backup power plant. They analyzed the Nash equilibria under various scenarios based on PT, and evaluated the impact of user's objective weight on the equilibrium of the game.
	%Since considering impacts of reference point, the s-shaped value function and probability together is challenging, 
	To reflect the fact that realistic decision making is different from expected profit maximization, the studies in \cite{li_pi12, yang2014impact, wang2014integrating, xiao2015prospect} considered a linear value function with the probability distortion. However, to model the impact of risk preference on a decision maker, using only a linear value function is not comprehensive enough. In fact, based on the psychological studies in \cite{kahneman_pt79, tversky_ai92, kahneman_cv00}, the three characteristics of PT (i.e., reference point, s-shaped value function, and probability distortion) together determine the risk preference of a decision maker. Our paper is the first one that considers all three characteristics of PT for a more accurate and comprehensive understanding of the optimal decision problem.
	%linear value function cannot capture the decision maker's risk preference due to loss aversion or expectation of profit, and hence cannot capture the tradeoff between risk preference and expected profit.

%==============================================================================
\section{System Model} \label{sec:model}
%==============================================================================

%%%%%%%%%%%%%%%%%%%%%%%%%%%%%%%%%%%%%%%%%%%
\subsection{Spectrum Sensing and Leasing Tradeoff}

We consider a cognitive radio network with a spectrum owner and a virtual operator. From the empirical data in \cite{wang2015characterizing,wang2015understanding}, it is possible for the spectrum owner to estimate the spectrum utilization at a particular location accurately based on past measurements. In this paper, we assume that such estimations are accurate, so that the spectrum owner can divide its licensed spectrum into the \emph{primary band} and \emph{secondary band} according to the spectrum utilization. The spectrum owner uses the primary band to serve its primary customers (PCs), while reserves the secondary band to meet the potential leasing requests from the virtual operator. For the virtual operator, it can either try to sense the idle spectrum in the primary band (as PCs' activities are stochastic during the time period of interest), or lease the spectrum from the secondary band.

\begin{figure}[tbp]
\setlength{\abovecaptionskip}{-1mm}
  \setlength{\belowcaptionskip}{-4mm}
\centering\includegraphics[width=0.48\textwidth]{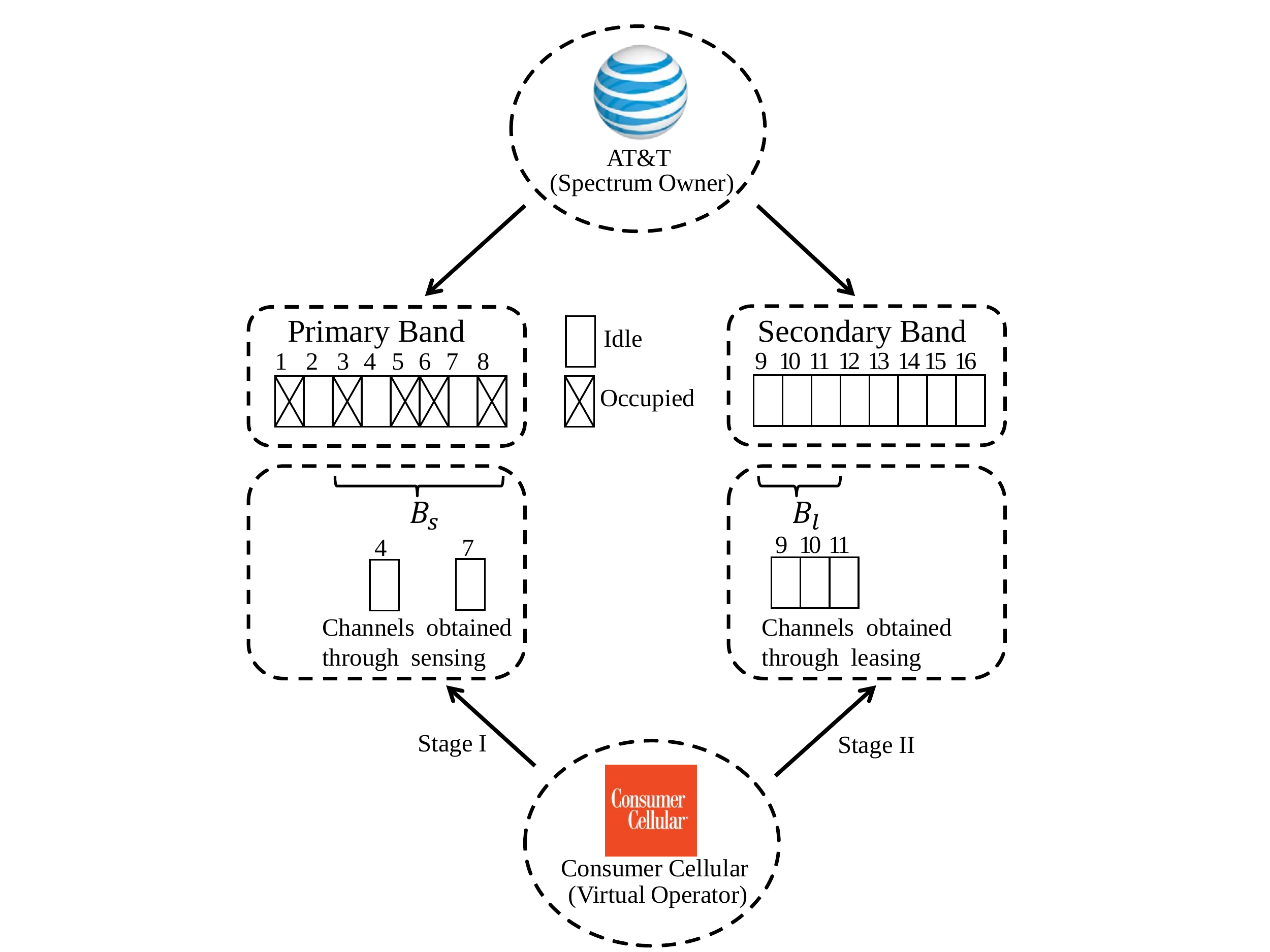}
\caption{An Example of Consumer Cellular and AT\&T.}\label{Fig:1}
\vspace{-4mm}
\end{figure}
	  
As a more concrete (hypothetical) example, we consider the spectrum trading between Consumer Cellular and AT\&T (shown in Fig. \ref{Fig:1}).
  %We consider the example of Google and AT$\&$T to demonstrate the impact of sensing and leasing decisions. 
  AT\&T is a spectrum owner, who provides wireless services to its PCs. However, it cannot fully utilize its spectrum in some rural areas, so it will divide its spectrum into the primary band (channel 1-8) and secondary band (channel 9-16) at those under-utilized locations.
  Consumer Cellular wants to provide spectrum services to its own customers, but it does not own any spectrum. As a result, Consumer Cellular senses for spectrum holes (not used by the PCs) in the primary band (channel 3-8) without explicit payment to AT\&T, and can also choose to lease spectrum in the secondary band (channel 9-11) with explicit payment to AT\&T.

	From the virtual operator's point of view, sensing is often a cheaper way to obtain spectrum than leasing, because the energy and time overhead involved in sensing is often much lower than the explicit cost of spectrum leasing \cite{duan_ia11}. 
  However, the available amount of spectrum obtained through sensing is uncertain due to the spectrum owner PCs' stochastic activities over time. We would like to understand the virtual operator's optimal spectrum investment decisions in every time slot that strike the best tradeoff between the cost and the risk\footnote{\weipp{We choose the length of time slot such that the primary customers' activities remain unchanged within a time slot \cite{duan_ia11}.}}. 
  %Due to the short-term property of both sensing and leasing, the virtual operator needs to make both the sensing and leasing decisions in each time slot. 
%The choice of time slot length depends on characteristics of the primary customers' traffic. Normally, the length of time slot is small enough, so that the probability that primary customers’ activities change within a time slot is very small \cite{duan_ia11}. This ensures that the primary customers' outage probability due to secondary customers’ access can be neglected.
%In our simulation, we assume that the primary customers' activities does not change during the sensing slot.

%%%%%%%%%%%%%%%%%%%%%%%%%%%%%%%%%%%%%%%%%%%
\subsection{Two-Stage Decision Model}

  We formulate the virtual operator's spectrum investment problem as a two-stage sequential optimization problem in each time slot.
  
	In Stage I (i.e., the sensing stage), the virtual operator determines its sensing decision $B_s$ (measured in Hz). 
	%Notice that the sensing decision refers to the number of channels that are sensed by the operator, hence $B_s$ is an integer. 
	For simplicity, we assume a linear sensing cost $c_s$ per unit of sensed bandwidth, which represents the time and energy overhead for sensing \cite{liang2008sensing}. % is linear in the sensing bandwidth $B_s$. 
  Due to the stochastic nature of PCs' traffic, only a fraction $\alpha\in [0,1]$ of the sensed spectrum is temporarily available and can be utilized by the virtual operator's own customers. \weip{Hence, the virtual operator obtains a bandwidth of $B_s\alpha$\footnote{\weipp{For simplicity, we assume perfect sensing in this paper. For imperfect sensing \cite{li2014dynamic}, it involves an additional level of uncertainty, which is challenging to consider due to the framing effect \cite{kahneman_pt79} in behavioral economics.}} at the end of the stage. In other words, a large $\alpha$ corresponds to a \emph{high sensing realization}, and a small $\alpha$ corresponds to a \emph{low sensing realization}.} As an example in Fig. \ref{Fig:1}, Consumer Cellular senses six channels in Stage I (i.e., channel 3-8), and only channels 4 and 7 are available. Hence, $\alpha = 1/3$ in this case. We assume that the virtual operator knows the distribution of $\alpha$ through historical sensing results\footnote{\weipp{
  In practice, it is difficult for the operator to obtain the exact distribution of $\alpha$. However, according to \cite{willkomm2008primary, willkomm2009primary, geirhofer2006dynamic, kim2008efficient}, the operator can estimate the distribution of $\alpha$ through learning based on the updated historical sensing results \cite{6185553, 6195522}.}}. \revo{Notice that in the sensing stage, the virtual operator has uncertainty of sensing realization, and its optimal decision will be influenced  by balancing the expected profit and its risk preference on the sensing uncertainty. We will discuss the PT modeling on this aspect in more details in Section III-D. }
    
  In Stage II (i.e., the leasing stage), the virtual operator determines the leasing decision $B_l$ (measured in Hz) after knowing the available amount \weipp{of} spectrum through sensing, $B_s\alpha$. We consider a linear leasing cost $c_l$, which is determined through negotiation between the virtual operator and spectrum owner and is considered to be a fixed parameter in our model. As an example in Fig. \ref{Fig:1}, after Consumer Cellular acquires two channels from sensing (i.e., \weipp{channels} 4 and 7), it further leases three more channels in Stage II (i.e., channel 9-11). \revo{As there is no uncertainty involved in Stage II, there is no difference between EUT and PT modeling in terms of results\footnote{\weipp{In fact, as long as $\lambda=\beta=\alpha=1$, choosing a non-zero value of $R_p$ will just induce a constant shift of the EUT utilities, without affecting the optimal decision under EUT. The reference point affects the analysis of PT, and our analysis shows the impact of reference point in Section V-B.}}.}
  %Noted that sensing cost is less than leasing cost, 

  \subsection{Virtual Operator's Profit}

  \revo{When serving its customers, the virtual operator can obtain a revenue of $\pi$ per unit of sold spectrum. We assume that the price $\pi$ is exogenously given and cannot be changed by the virtual operator, due to the intensive market competition \cite{7317774}. Under a fixed usage-based price $\pi$, we assume that the virtual operator's secondary customers' maximum spectrum (bandwidth) demand is $D$\footnote{We do not assume any specific relationship between price $\pi$ and demand $D$ in this paper. Please refer to \cite{7317774} for some further discussions along this line.}. However, the demand may not be fully satisfied if the virtual operator does not have enough spectrum obtained through sensing and leasing discussed before. Hence,}
  %  Assuming that total demand $D$ and market price $\pi$ are both known and given, 
  the profit of the virtual operator is 
\begin{equation}
R\left(B_s,B_l,\alpha\right) = \pi\min\{D,B_l+B_s\alpha\}-\left(B_s c_s + B_l c_l\right),\label{equ:revenue}
\end{equation}

\noindent
where the revenue (first term on the right hand side) depends on the minimum of the demand $D$ and the spectrum supply $B_l+B_s\alpha$, and the cost (second term on the right hand side) depends on both the sensing decision $B_s$ and leasing decision $B_l$. \revo{As $\alpha$ is a random variable before making the sensing decision $B_s$, we will incorporate the operator's risk preference towards such uncertainty through the modeling based on prospect theory.}

\subsection{Prospect Theory Modeling in Sensing Decision}
\revo{To model the virtual operator's decision under spectrum sensing uncertainty, we consider the following three key features of PT: \emph{reference point} $R_p$, s-shaped \emph{value function} $v\left(x\right)$, and \emph{probability distortion function} $w\left(p\right)$. 

First, the choice of reference point $R_p$ will significantly affect the evaluation of profit $R\left(B_s,B_l, \alpha\right)$ in PT. More specifically, we define the net gain as
\begin{equation}
    x = R\left(B_s,B_l,\alpha\right)-R_p.
\end{equation}
The reference point is a benchmark to evaluate the payoff, where $x \geq 0$ means a gain, while $x < 0$ means a loss. The virtual operator will have different decision mechanisms (to be explained in the next paragraph) when dealing with a gain or a loss in PT. \rev{A higher reference point means that the operator expects a higher profit (at the benchmark), which implies the operator is more risk-seeking.}

Second, as shown in Fig. \ref{fig:valuepdfun}(a), the value function $v\left(x\right)$ is concave for a positive argument (gain), and is convex for a negative argument (loss). Moreover, the impact of loss is usually larger than the gain of the same absolute value. A common choice of value function \cite{tversky_ai92, kahneman_cv00, he2011portfolio} is 
\begin{equation} \label{equ:valuefcn}
     v(x)=\left\{
    \begin{aligned}
    &x^\beta,  &   &\text{if }x\geq 0,\\
    &-\lambda(-x)^\gamma, &   &\text{if }x<0,\\
    \end{aligned}
    \right.\\
\end{equation}

\begin{figure}[t]
\setlength{\belowcaptionskip}{-5mm}
%\begin{tabular}{cc}   
\begin{minipage}{0.48\linewidth}
  \centerline{\includegraphics[width=4cm]{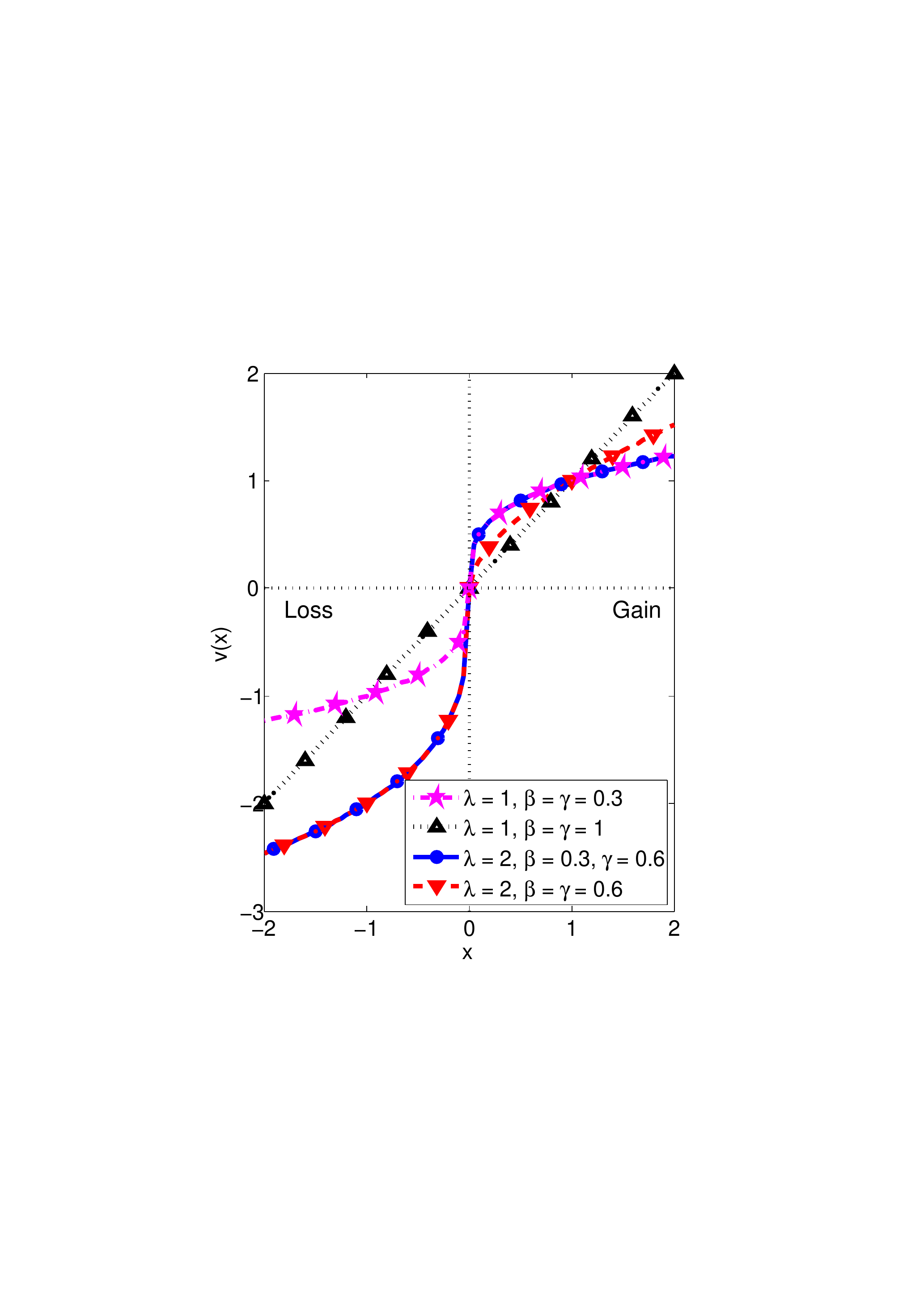}}
  \vspace{-2mm}
  \centerline{\weipp{(a) $v(x)$}}
\end{minipage}
\hfill
\begin{minipage}{.48\linewidth}
  \centerline{\includegraphics[width=4.5cm]{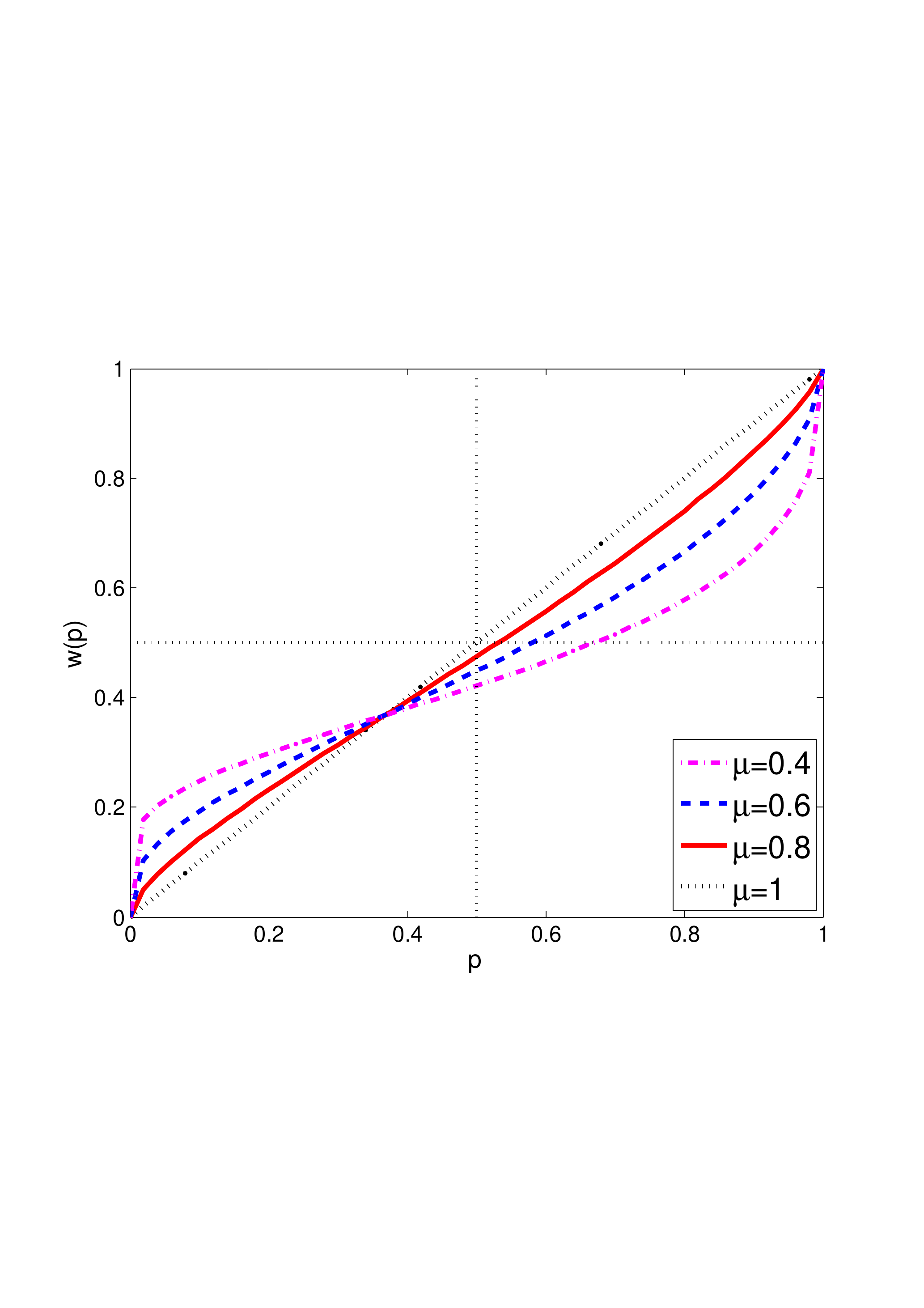}}
      \vspace{-2mm}
  \centerline{(b) $w(p)$}
\end{minipage}
\vfill
\caption{\weipp{The s-shaped asymmetrical value function $v(x)$ and the probability distortion function $w(p)$ in PT.}}
\vspace{-2mm}
\label{fig:valuepdfun}
\end{figure}

\noindent where $\lambda>1$, $0<\beta<1$, and $0<\gamma<1$. %\cmt{Junlin, do we have the name for $\beta$? I cannot find a name, can we make it by ourselves, say, value power?}
\rev{The parameter $\lambda$ is the loss penalty parameter, where a larger $\lambda$ indicates that the virtual operator is more concerned of loss, and hence is more \emph{risk-averse}.
The parameters $\beta$ and $\gamma$ are the risk parameters, where the value function of the gain part is more concave (i.e., the virtual operator is more \emph{risk-averse}) when $\beta$ approaches zero, and the value function of the loss part is more convex (i.e., the virtual operator is more \emph{risk-seeking}) when $\gamma$ approaches zero. The impact of $\beta$ and $\gamma$ can be interpreted by the risk-seeking behavior in loss and risk-averse behavior in gain. As an example, a gambler will be more addicted into the gambling when it loses money, and will be less willing to continue when he wins money.}

We note that EUT is a special case if we choose $\lambda = 1$ and $\gamma = \beta=1$. The result under the case $\gamma=\beta$ has been discussed in the conference version of this work in \cite{yu2014spectrum}, and we focus on the more complicated case of $\beta < \gamma$ in this paper\footnote{\revo{The analysis can be readily extended to the case of $\beta>\gamma$, although there are no additional new insights in that case. Hence we omit the discussion of $\beta>\gamma$ here.}}, \weip{which models the scenario that the marginal utility in gain is diminishing faster than the marginal disutility in loss} \cite{rieger2008prospect}.}
\weipp{\begin{asp}
The risk parameter for gain is less than the risk parameter for loss (i.e., $\beta<\gamma$).
%The risk parameter $\beta$ is less than the risk parameter $\gamma$.
\end{asp}}
Third, as shown in Fig. \ref{fig:valuepdfun}(b), the probability distortion function $w(p)$ models the fact that virtual operator overweighs a small probability event and underweighs a large probability event. A common choice of probability distortion function (e.g., \cite{tversky_ai92, kahneman_cv00, he2011portfolio}) is 
\begin{equation} \label{equ:pdfcn}
	w(p) = \exp\left(-\left(-\ln p\right)^\mu\right), \;0<\mu\leq1,
\end{equation}

\noindent where $p$ is the \emph{objective} probability of high sensing realization, and $w(p)$ is the virtual operator's corresponding \emph{subjective} probability. Here the probability distortion parameter $\mu$ reveals how a virtual operator's subjective evaluation distorts the objective probability, where a smaller $\mu$ means a larger distortion. 

\weipp{The key notations of the virtual operator's profit maximization problem are listed in Table \ref{tab:notation}.} In the next section, we will study the virtual operator's optimal sensing and leasing decisions that maximize its profit\footnote{We will simply use ``operator'' to denote ``virtual operator'' for the rest of the paper.}. 

\weipp{
\begin{table}[t]
    \centering
    \vspace{-2mm}
    \caption{Key Notations}
    \vspace{-2mm}
    \begin{tabular}{|c|c|}
    \hline
    \textbf{Symbols} & \textbf{Physical Meanings}\\
    \hline
         $B_s$ (decision variable) & Sensing amount  \\
         \hline
         $B_l$ (decision variable) & Leasing amount \\
         \hline
         $c_s$ & Sensing cost per unit\\
         \hline
         $c_l$ & Leasing cost per unit \\
         \hline
         $\pi$ & Price per unit \\
         \hline
         $D$ & Secondary users' demand \\
         \hline
         $\alpha$& Sensing realization factor \\
         \hline
         $\beta$& Risk parameter for gain \\
         \hline
         $\gamma$& Risk parameter for loss \\
         \hline
         $\lambda$& Loss penalty parameter \\
         \hline
         $\mu$& Probability distortion parameter \\
         \hline
    \end{tabular}
    \label{tab:notation}
\end{table}
}

%==============================================================================
\section{Solving the Two-stage Optimization Problem} \label{sec:twostage}
%==============================================================================

  In this section, we use backward induction \cite{mas1995microeconomic} to solve the two-stage sequential optimization problem.  
  In Section IV-A, we derive the operator's optimal leasing decision in Stage II.
  In Section IV-B, we obtain the operator's optimal sensing decision in Stage I under sensing uncertainty by the PT model\footnote{It should be noted that the models under EUT and PT share the same Stage II (i.e., the leasing stage), because uncertainty only appears in Stage I (i.e., the sensing stage).}.

%%%%%%%%%%%%%%%%%%%%%%%%%%%%%%%%%%%%%%%%%%%
\subsection{Optimal Leasing Decision in Stage II}
In Stage II, given a fixed value of sensing bandwidth $B_s$ determined in Stage I, the operator's leasing optimization problem is \begin{align} \label{equ:problem_lease}
\max_{B_l\geq 0}~~ R(B_s,B_l,\alpha)\!=\!\pi\min\{D,(B_l\!+\!B_s\alpha)\}\!-\!(B_s c_s\! + \!B_l c_l).
\end{align}  

In our analysis, we make the following assumptions.
\vspace{-2mm}
\weipp{\begin{asp}
 %The leasing cost $c_l$ is less than the operator's usage-based price $\pi$.
 (a) The sensing cost is less than the leasing cost (i.e., $c_s < c_l$); (b) The
leasing cost is less than the operator’s usage-based price (i.e., $c_l<\pi$).
\end{asp}}
\vspace{-2mm}
Both parts of Assumption 2 allow us to focus on the non-trivial case in our analysis. When $c_s\geq c_l$, we can show that the operators' optimal decision of the two-stage process is always \weipp{$B_s^* = 0$ and $B_l^* = D$} \revo{for any type of risk preferences, because leasing is both cheaper and risk-free.} %Otherwise, the operator may lease too much in fear of insufficient band when $\alpha$ is small.
When $c_l \geq \pi$, we can show that the operator will always choose $B_l^* = 0$ \revo{for any type of risk preferences}. Hence, we can focus on Assumption 2 without loss of any generality. Under Assumption 2, we can show that the optimal leasing decision that \weipp{solves} Problem \eqref{equ:problem_lease} is
\vspace{-2mm}
\begin{equation}
B_l^*=\max \{D-B_s\alpha, 0\}, \label{wei:6}
\vspace{-2mm}
\end{equation} 

\noindent
which is the difference between the total demand $D$ and the available spectrum through sensing, $B_s\alpha$. If $B_s\alpha$ exceeds $D$, then $B_l^* = 0$.
Under \eqref{wei:6}, the operator's profit in (1) can be written as a function of $B_s$ and $\alpha$:
\vspace{-0.8mm}
\begin{equation}
R(B_s,B_l^*,\alpha)=\pi D-B_s c_s - \max \{D-B_s\alpha, 0\} c_l.\label{equ:revenues2}
\end{equation}
\vspace{-5mm}

%%%%%%%%%%%%%%%%%%%%%%%%%%%%%%%%%%%%%%%%%%%
\subsection{Optimal Sensing Decision in Stage I}
\revo{We assume that the sensing realization factor $\alpha$ follows a discrete distribution with $I$ possible outcomes\footnote{\weipp{Since the spectrum is usually divided into a finite number of channels in practical systems, it is reasonable to consider a discrete distribution of $\alpha$.
Mathematically, when we choose the number of possible realizations $I$ to be large, then the discrete distribution can well approximate a continuous distribution.}}, hence has a finite number of sensing realization outcomes \cite{zeng2010review, li2014dynamic}}. The corresponding probability mass function is denoted as
	\begin{align}
	\vspace{-2mm}
	p(\alpha_i) \triangleq\mathbb{P}(\alpha=\alpha_i)=p_i,\quad i\in \mathcal{I} = \{1,...,I\}.
	\vspace{-2mm}
	\end{align}
Without loss of generality, we assume that $\alpha_i<\alpha_j$ if $i<j$.
%Therefore, profit $R(B_s,B_l,\alpha)$ in (1) is a discrete random variable, and we write the expectation of $R(B_s,B_l,\alpha)$ as
%\begin{equation}
%\mathbb{E}[R(B_s,B_l,\alpha)]=\sum\limits_{i=1}^I (\pi D-(B_s c_s + \max \{D-B_s\alpha_i, 0\} c_l)-R_p)p(\alpha_i).
%\end{equation}
By substituting (2), (3), (4) and (8) into (7),
%applying PT for discrete outcomes \cite{rieger2008prospect} mentioned above, %in which we use prospect $P(x)$ instead of using expectation $E(x)$,
  we obtain the \emph{expected utility} of the operator under PT as 
  %\cmt{Junlin, I wonder if it is correct to use the equation below. It seems that it is for discrete outcomes.}
%  
\begin{align}
U&\left(B_s\right)=\sum_{i = 1}^I v\left[R\left(B_s, B_l^*, \alpha_i\right) - R_p\right] w\left(p\left(\alpha_i\right)\right) \notag\\
=& \sum\limits_{i=1}^I v\left[\pi D\!-\!\left(B_s c_s \!+\! \max \{D\!-\!B_s\alpha_i, 0\} c_l\right)\!-\!R_p\right]w\left(p\left(\alpha_i\right)\right).
%U(B_s)=\sum_R v(R-R_p)w(p(R)),
\end{align}

  %$w(p)$ is the probability distortion function in PT. In this work, we do not consider the effect of probability distorion in this work.

  The operator's spectrum sensing optimization problem in Stage I with sensing uncertainty is
\begin{align} \label{equ:problem_sense}
\max_{B_s\geq 0}\quad U(B_s).
\end{align}  

\begin{table*}[t] 
\centering 
\vspace{-2mm}
\caption{Optimal Sensing and Leasing Decision under PT}
\vspace{-2mm}
\begin{tabular}{|c|c|c|}
\hline
\textbf{Condition}&\textbf{Optimal Sensing Decision} $B_s^*$&\textbf{Optimal Leasing Decision} $B_l^*$\\ \hline
$D\leq M_{{\hat{\imath}}+1}$&$B_s^*=\frac{D}{\alpha_{{\hat{\imath}}+1}}$&$B_l^*=\max\{0,D-\frac{D\alpha}{\alpha_{{\hat{\imath}}+1}}\}$\\ \hline
$M_j< D< H_{j}$ for $j={\hat{\imath}}+1,...,I-1$&$B_s^*=g_{j}^{-1}(0)$&$B_l^*=\max\{0,D-\alpha g_{j}^{-1}(0)\}$\\ \hline
$H_{j}\leq D< M_{j+1}$ for $j={\hat{\imath}}+1,...,I-1$&$B_s^*=\frac{D}{\alpha_{j+1}}$&$B_l^*=\max\{0,D-\frac{D\alpha}{\alpha_{j+1}}\}$\\ \hline
$D\geq M_I$&$B_s^*=\frac{M_I}{\alpha_I}$&$B_l^*=D-\frac{M_I\alpha}{\alpha_I}$\\ \hline
\end{tabular} \label{table:pt}
\vspace{-4mm}
\end{table*}
  
\subsubsection{\weip{Optimal Sensing Decision under a Risk-free Reference Point $R_p=D(\pi-c_l)$}}

Problem \eqref{equ:problem_sense} is a \emph{non-convex} optimization problem, because it involves the s-shaped value function in \eqref{equ:valuefcn}. Hence, it is challenging to analytically characterize the closed-form optimal solution. However, we can show that there exists a \emph{unique} global optimal solution by exploiting the special \emph{unimodal} structure of the problem. 

In problem \eqref{equ:problem_sense}, a common choice of reference point is the risk-free profit\footnote{\weip{
A complete analysis for an arbitrary reference point is challenging because of the non-convexity in Problem \eqref{equ:problem_sense}. In Section V, we will further look into the impact of reference point in a simplified model with binary sensing results.
}}. For example, in finance, investors naturally
choose the risk-free return as a benchmark to evaluate their investment performances \cite{he2011portfolio}. Here, we choose the maximum profit that the operator can achieve without sensing (hence a risk free choice) as the reference point. This corresponds to the operator leasing a bandwidth $B_l=D$ and choosing $B_s = 0$, which leads to the profit of
\begin{equation}
R_p=D(\pi-c_l).\label{wei:11}
\end{equation}

Substituting \eqref{wei:11} into \eqref{equ:problem_sense}, we solve problem \eqref{equ:problem_sense} and summarize the key result in Table \ref{table:pt}. \revo{To understand Table \ref{table:pt}, we first define some notations.}
\rev{We define $\hat{\imath}$ $\in \mathcal{I}$ as the unique index that satisfies both the constraints $\alpha_{{\hat{\imath}}}\leq \frac{c_s}{c_l}$ and $\alpha_{{\hat{\imath}}+1}> \frac{c_s}{c_l}$\footnote{Both the case $\alpha_1\geq\frac{c_s}{c_l}$ and the case $\alpha_I\leq\frac{c_s}{c_l}$ are trivial, and the discussion under these two cases \weipp{are} covered in the discussions under the case $\alpha_1<\frac{c_s}{c_l}<\alpha_I$.}.
We define $H_j$ (for $j={\hat{\imath}}+1$, ..., $I-1$) (see (17) in Appendix A) and $M_j$ (for $j={\hat{\imath}}+1$, ..., $I$) (see (15) and (16) in Appendix A) in Table \ref{table:pt} as decision indicators. The decision indicators are increasing functions of the leasing cost $c_l$ and the risk parameter for gain $\beta$, and decreasing functions of the sensing cost $c_s$, the risk parameter for loss $\gamma$, and the loss penalty parameter $\lambda$. In other words, when the sensing cost $c_s$ is higher or when the operator is more risk-averse (e.g., with a larger $\lambda$, a smaller $\beta$, or a larger $\gamma$), the decision indicators $H_j$ and $M_j$ decrease. The function $g_j(B_s)$ (for $j=\hat{\imath}+1,...,I-1$) (see (27) in Appendix A) is a decreasing function in $B_s$. }

%Note that $H_i<H_j$ for all $i<j$, representing different levels of demand parameters. There are at most $n-k$ levels of $H_j$, depending on the distribution of $\alpha$. When there are more outcomes of $\alpha$ larger than $\frac{c_l}{c_s}$, there are more levels of $H_j$.

\weipp{
\begin{thm} \label{thm:pt_table}
Under Assumptions 1 and 2, the optimal sensing decision $B_s^*$ for problem (10) and the optimal leasing decision $B_l^\ast$ for problem (5) are summarized in Table \ref{table:pt}. 
\end{thm}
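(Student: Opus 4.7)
The plan is to exploit the piecewise-power structure of $U(B_s)$ and prove global unimodality. My first step would be to substitute the Stage~II optimizer \eqref{wei:6} into \eqref{equ:revenues2}, which makes the net gain $R(B_s,B_l^\ast,\alpha_i)-R_p$ piecewise linear in $B_s$ with a single kink at $B_s=D/\alpha_i$: by the definition of $\hat{\imath}$, each such linear piece is either a gain (for $i\geq\hat{\imath}+1$, on $0<B_s\leq D/\alpha_i$ and on $D/\alpha_i<B_s<Dc_l/c_s$) or a loss (for $i\leq\hat{\imath}$, always, or for $i\geq\hat{\imath}+1$ once $B_s>Dc_l/c_s$). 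I would then partition the nonnegative $B_s$-axis at the breakpoints $\{D/\alpha_i\}_{i=\hat{\imath}+1}^{I}$ and at $Dc_l/c_s$, and on each resulting subinterval express $U(B_s)$ as a finite linear combination of shifted $\beta$-power gains and $\gamma$-power losses weighted by $w(p_i)$; in particular, on the subinterval $(D/\alpha_{j+1},D/\alpha_j)$ the derivative $U'(B_s)$ should reduce precisely to the function $g_j(B_s)$ of Appendix~A.

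The core analytical step would be to show that $g_j$ is strictly decreasing on each subinterval and that the one-sided derivatives of $U$ match suitably at every breakpoint. Monotonicity of $g_j$ should follow from each term being the derivative of a concave power (since $\beta,\gamma<1$), with Assumption~1 ($\beta<\gamma$) ensuring that the gain pieces dominate the loss pieces near a breakpoint so that no cancellation produces non-monotonicity. Hence $g_j$ has at most one zero in the subinterval, yielding either an interior stationary point $g_j^{-1}(0)$ or a monotone piece whose maximizer sits at an endpoint. I would then chain the per-interval monotonicity with breakpoint matching to conclude that, once $U'$ becomes non-positive, it stays non-positive, which delivers global unimodality and a unique maximizer $B_s^\ast$. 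I expect this stitching across intervals with heterogeneous power-law behavior to be the main obstacle, since both the $\beta$-gain and the $\gamma$-loss contributions change regime at the breakpoints and one has to rule out any revival of $U'$.

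Finally, to recover the four cases of Table~\ref{table:pt}, I would locate $B_s^\ast$ relative to the breakpoints using the thresholds $M_j$ and $H_j$, defined in Appendix~A so that $M_j$ is the value of $D$ at which $g_j$ vanishes at the left endpoint $B_s=D/\alpha_{j+1}$ and $H_j$ is the value at which $g_j$ vanishes at the right endpoint $B_s=D/\alpha_j$. Small demand ($D\leq M_{\hat{\imath}+1}$) would force the maximizer to the first gain-generating breakpoint $B_s^\ast=D/\alpha_{\hat{\imath}+1}$; intermediate demands would select either an interior optimum $g_j^{-1}(0)$ (when $M_j<D<H_j$) or a boundary optimum $D/\alpha_{j+1}$ (when $H_j\leq D<M_{j+1}$); and large demand ($D\geq M_I$) would saturate the optimum at $B_s^\ast=M_I/\alpha_I$. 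The corresponding $B_l^\ast$ then follows by direct substitution into~\eqref{wei:6}.
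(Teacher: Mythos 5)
Your proposal follows essentially the same route as the paper's Appendix A: the same partition of the $B_s$-axis at the breakpoints $D/\alpha_i$ and at $Dc_l/c_s$ (beyond which sensing costs more than pure leasing), the same reduction of $U'$ on each sub-interval to a positive prefactor times the strictly decreasing function $g_j$ (with Assumption 1 supplying $\gamma-\beta>0$ so the loss term $-B_s^{\gamma-\beta}$ is decreasing), the same downward-jump stitching via $g_j>g_{j-1}$ across breakpoints, and the same reading of $M_j$ and $H_j$ as endpoint thresholds yielding the four rows of Table II. The only slip is that you have the roles of $M_j$ and $H_j$ reversed --- in the paper $g_j$ vanishes at the \emph{left} endpoint $B_s=D/\alpha_{j+1}$ when $D=H_j$ and at the \emph{right} endpoint $B_s=D/\alpha_j$ when $D=M_j$, which (since $g_j$ is decreasing) is exactly what makes the interior-optimum case read $M_j<D<H_j$ --- but this is a labeling issue that does not affect the method.
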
}
	
 \weip{By using the unimodal structure of Problem \eqref{equ:problem_sense}, we show that there is at most one inner local maximum point, and hence the global \weipp{optimum} is either at the local maximum point or at the boundaries \cite{dharmadhikari1988unimodality}. For the detailed proof of Theorem \ref{thm:pt_table}, please refer to Appendix \ref{app:pt_table}. 
 As shown in Table \ref{table:pt}, the operator's optimal sensing and leasing decisions depend on the decision indicators $H_j$ and $M_j$.}
 Notice that $M_j<H_j<M_{j+1}$ for every $j$, and all $H_j$ (for $j={\hat{\imath}}+1$, ..., $I-1$) and $M_j$ (for $j={\hat{\imath}}+1$, ..., $I$) are increasing in $c_l$ and $\beta$, and are decreasing in $c_s$, $\gamma$, and $\lambda$. For a more risk-averse operator (with larger $\lambda$, smaller $\beta$, and larger $\gamma$), it has smaller decision indicators $H_j$ and $M_j$ (which refers to a lower row in Table \ref{table:pt}), hence it leads to a smaller $B_s^*$.

%  Next, we further characterize the relationship between $B_s^*$ and the loss penalty parameter $\lambda$, in order to capture the impact of loss penalty to the investment decisions.
  To better illustrate the insights behind Table \ref{table:pt}, we further characterize the impact of sensing cost $c_s$ and the risk parameters $\lambda$, $\beta$ and $\gamma$ on $B_s^*$ in the following corollaries. 

\weipp{
\begin{coro} \label{lem:lambda}
 The optimal sensing decision $B_s^*$ for problem (10) is decreasing in the loss penalty parameter $\lambda$ in (3) and the risk parameter for loss $\gamma$ in (3).
\end{coro}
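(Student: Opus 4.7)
The plan is to read off $B_s^\ast$ from Theorem \ref{thm:pt_table} (Table \ref{table:pt}) and exploit the monotonicity of the decision indicators $M_j$ and $H_j$ in $\lambda$ and $\gamma$. As noted in the paragraph following Theorem \ref{thm:pt_table}, both $M_j$ (for $j=\hat{\imath}+1,\ldots,I$) and $H_j$ (for $j=\hat{\imath}+1,\ldots,I-1$) are strictly decreasing in each of $\lambda$ and $\gamma$; this is because raising either parameter amplifies the loss-side penalty entering the first-order condition of $U(B_s)$ and thus shrinks the threshold levels of $D$ that delimit the four regimes in Table \ref{table:pt}.

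First I would verify within-row monotonicity. In Rows 1 and 3, the formulas $B_s^\ast=D/\alpha_{\hat{\imath}+1}$ and $B_s^\ast=D/\alpha_{j+1}$ are independent of $\lambda$ and $\gamma$, so the risk parameters influence the answer only through the endpoints of the $D$-intervals. In Row 4, $B_s^\ast = M_I/\alpha_I$ is decreasing in $\lambda$ and $\gamma$ because $M_I$ is. In Row 2, $B_s^\ast = g_j^{-1}(0)$; since $g_j$ is strictly decreasing in $B_s$, its unique zero moves leftward whenever $g_j$ shifts downward pointwise, and the loss contribution $-\lambda\gamma B_s^{\gamma-1}\sum_{i\le\hat{\imath}}(c_s-\alpha_i c_l)^\gamma w(p_i)$ entering $g_j$ does exactly this as $\lambda$ or $\gamma$ grows. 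I would then stitch the four rows together using the fact that $B_s^\ast$ is continuous across the row transitions: by the very definition of $M_j$ and $H_j$, adjacent row formulas agree at $D=M_j$ and $D=H_j$ (for instance, $D/\alpha_{j+1}=g_j^{-1}(0)$ at $D=H_j$). Hence an increase in $\lambda$ or $\gamma$ that pushes a fixed $D$ from one row to the next does so without any upward jump in $B_s^\ast$, and combined with within-row monotonicity this yields the global weak monotonicity claimed.

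The step I expect to be the main obstacle is rigorously signing the $\gamma$-effect on $g_j^{-1}(0)$. For $\lambda$, a clean Topkis-type argument suffices: $\lambda$ enters $U(B_s)$ only as a multiplicative constant on the convex loss-side terms, so the cross-partial $\partial^2 U/(\partial B_s\,\partial\lambda)\le 0$ is immediate and monotone comparative statics gives $B_s^\ast$ decreasing in $\lambda$ directly, bypassing the case analysis. For $\gamma$, however, the cross-partial is not uniformly signed, because $\gamma$ simultaneously scales the first-order loss term (via the chain-rule prefactor $\gamma$) and reshapes the curvature of $v$ (via the exponent $\gamma$ in $(-x)^\gamma$). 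I would therefore avoid a global cross-partial argument for $\gamma$ and instead differentiate the defining identity $g_j(B_s^\ast;\gamma)=0$ implicitly, using Assumption 1 ($\beta<\gamma$) together with the explicit form of $g_j$ from Appendix A to sign the resulting expression for $dB_s^\ast/d\gamma$ and conclude it is nonpositive.
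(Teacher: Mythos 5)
Your treatment of $\lambda$ is sound and is essentially the paper's own argument: the paper computes $\partial U'(B_s)/\partial\lambda=\sum_{i\le\hat{\imath}}-\gamma w(p_i)(c_s-c_l\alpha_i)^{\gamma}B_s^{\gamma-1}<0$ and concludes that the zero of the unimodal $U'$ shifts left, which is exactly your Topkis-type cross-partial argument. Your row-by-row reading of Table II, with continuity stitching at $D=M_j$ and $D=H_j$, is a more explicit (and more laborious) packaging of the same fact, since the rows of the table are generated by the sign pattern of $U'$ at the sub-interval endpoints; it adds nothing the cross-partial route does not already give, but it is not wrong.

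The genuine gap is in the $\gamma$ part, precisely the step you flag as the ``main obstacle'' and then do not carry out. The paper handles $\gamma$ with the same cross-partial device, asserting that
$\partial U'(B_s)/\partial\gamma=\sum_{i\le\hat{\imath}}-\lambda w(p_i)(c_s-c_l\alpha_i)^{\gamma}B_s^{\gamma-1}\left[1+\gamma\ln\left((c_s-c_l\alpha_i)B_s\right)\right]$
is negative; you correctly observe that this is not uniformly signed, since the bracket flips sign when the loss magnitude $(c_s-c_l\alpha_i)B_s$ falls below $e^{-1/\gamma}$. But your proposed repair inherits exactly the same problem: $\partial g_j/\partial\gamma$ contains the very same factors $\ln\left((c_s-c_l\alpha_i)B_s\right)$, and Assumption 1 ($\beta<\gamma$) gives no control over their sign, because they depend on the absolute scale of the monetary losses, not on the curvature parameters. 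Implicit differentiation of $g_j(B_s^{*};\gamma)=0$ therefore relocates the difficulty rather than resolving it. The same scale-dependence undermines your opening appeal to the monotonicity of $M_j$ and $H_j$ in $\gamma$, which you take from the descriptive text after Theorem 1 but which itself rests on the unproved sign of the $\gamma$-derivative (through $(c_s-c_l\alpha_i)^{\gamma}$ and the outer exponent $1/(\gamma-\beta)$). As written, the $\gamma$ half of your argument is a plan whose decisive computation has no reason to come out nonpositive without an additional normalization (e.g., guaranteeing the realized losses exceed $e^{-1/\gamma}$ on the relevant range) or a different argument altogether.
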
}
\weip{The proof of Corollary \ref{lem:lambda} is given in Appendix B}. Corollary \ref{lem:lambda} indicates that if an operator is more risk-averse (with a larger loss penalty parameter $\lambda$ or a larger risk parameter for loss $\gamma$), it will sense less, as it prefers avoiding loss due to a low sensing realization to achieving gain due to a high sensing realization. 

\weipp{\begin{coro}\label{lem:beta}
 When the demand is larger than the $I^{th}$ decision indicator (i.e., $D>M_I$), the optimal sensing decision $B_s^*$ for problem (10) is increasing in the risk parameter for gain $\beta$ in (3).
\end{coro}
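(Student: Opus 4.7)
The plan is to reduce the claim to the monotonicity of the decision indicator $M_I$ in $\beta$ that is already asserted in the discussion following Theorem~\ref{thm:pt_table}. Table~\ref{table:pt} states that $B_s^* = M_I/\alpha_I$ in the regime $D \geq M_I$, and $\alpha_I$ is a fixed attribute of the distribution of $\alpha$ that does not depend on $\beta$; hence it suffices to show that $M_I$ is increasing in $\beta$.

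To make this concrete, I would first pin down the explicit form of $M_I$. In the regime $B_s \alpha_I < D$, every realization has positive leasing $B_l^\ast = D - B_s\alpha_i > 0$, so the profit expression reduces to $R - R_p = B_s(\alpha_i c_l - c_s)$, which is a gain for $i > \hat{\imath}$ and a loss for $i \leq \hat{\imath}$. Substituting into the PT-utility (9) and grouping terms by the sign of the net payoff gives
\begin{equation*}
U(B_s) = -\lambda A\, B_s^\gamma + B(\beta)\, B_s^\beta,
\end{equation*}
where $A = \sum_{i \leq \hat{\imath}} w(p_i)(c_s - \alpha_i c_l)^\gamma$ does not depend on $\beta$, while $B(\beta) = \sum_{i > \hat{\imath}} w(p_i)(\alpha_i c_l - c_s)^\beta$ does. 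The first-order condition $U'(B_s^*) = 0$ then yields the closed form $M_I = \alpha_I \left(\beta B(\beta)/(\gamma \lambda A)\right)^{1/(\gamma-\beta)}$; the second-order test, together with Assumption~1 ($\beta < \gamma$), confirms that this interior critical point is the unique maximizer.

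Next I would establish $dM_I/d\beta \geq 0$ by implicit differentiation of the logarithm of the FOC, $(\gamma-\beta)\log(M_I/\alpha_I) = \log\beta + \log B(\beta) - \log(\gamma\lambda A)$. Solving for the derivative and using the FOC itself to collapse the quotient rule produces
\begin{equation*}
\frac{d \log M_I}{d\beta} = \frac{1}{\gamma - \beta}\left[\log\frac{M_I}{\alpha_I} + \frac{1}{\beta} + \frac{B'(\beta)}{B(\beta)}\right].
\end{equation*}
Since $\gamma - \beta > 0$ by Assumption~1, the sign of $dM_I/d\beta$ is governed entirely by the bracket.

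The main obstacle is showing that this bracket is non-negative. The quantity $B'(\beta)/B(\beta)$ is the $B(\beta)$-weighted mean of $\log(\alpha_i c_l - c_s)$ over $i > \hat{\imath}$, whose sign is not immediate. The strategy is to back-substitute the closed-form expression for $M_I/\alpha_I$ into the bracket so that the $\log(M_I/\alpha_I)$ term can be rewritten in terms of $\log \beta$, $\log B(\beta)$, and $\log(\gamma\lambda A)$, and then to apply Jensen's inequality to bound the weighted log-mean in terms of $\log B(\beta)$ and $\beta$. Assumption~1 is then invoked to compare the exponents $\gamma$ and $\beta$ in the resulting estimate, yielding the desired non-negativity. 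This is precisely the technical core that the earlier discussion attributed to Appendix~A for the monotonicity of $M_j$, and specializing it at $j = I$ closes the argument.
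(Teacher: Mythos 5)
Your reduction to the monotonicity of $M_I$ is sound: in the regime $D\ge M_I$ Table~\ref{table:pt} gives $B_s^\ast=M_I/\alpha_I$, your closed form $M_I=\alpha_I\left(\beta B(\beta)/(\gamma\lambda A)\right)^{1/(\gamma-\beta)}$ agrees with (15), and the logarithmic differentiation is correct. The genuine gap is precisely the step you defer: the non-negativity of the bracket. Rewriting it,
\[
\log\frac{M_I}{\alpha_I}+\frac{1}{\beta}+\frac{B'(\beta)}{B(\beta)}
=\frac{1}{\beta}+\sum_{i>\hat{\imath}}\frac{w(p_i)(c_l\alpha_i-c_s)^{\beta}}{B(\beta)}\,\log\bigl((c_l\alpha_i-c_s)B_s^\ast\bigr),
\]
i.e.\ $1/\beta$ plus a weighted mean of the logarithms of the \emph{realized gains} $x_i=(c_l\alpha_i-c_s)B_s^\ast$. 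No Jensen-type estimate combined with Assumption~1 can make this non-negative unconditionally, because its sign depends on the scale of the payoffs (it requires the weighted geometric mean of the $x_i$ to exceed $e^{-1/\beta}$, a unit-dependent condition). A concrete counterexample satisfying Assumptions~1 and~2 and $D>M_I$: take $I=2$, $\alpha_1=0$, $\alpha_2=1$, $p_1=p_2=1/2$, $\mu=1$, $\lambda=1$, $c_s=1$, $c_l=1.01$, $\pi=8$, $D=10$, $\gamma=0.9$, $\beta=0.5$; then $B_s^\ast\approx 7.3\times 10^{-4}$ and the bracket is about $-9.8$, so $M_I$ is locally \emph{decreasing} in $\beta$ there. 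You also lean on ``the technical core attributed to Appendix~A,'' but Appendix~A never proves monotonicity of $M_j$ in $\beta$; that property is only asserted in the main text, so there is nothing there to specialize.

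For comparison, the paper's own Appendix~C takes the dual route: it differentiates the first-order condition $U'(B_s)$ with respect to $\beta$ at fixed $B_s$ and argues about where the root of $U'$ moves, rather than differentiating the closed-form solution. That derivative equals $\sum_{i>\hat{\imath}}w(p_i)(c_l\alpha_i-c_s)^{\beta}B_s^{\beta-1}\bigl[1+\beta\log\bigl((c_l\alpha_i-c_s)B_s\bigr)\bigr]$, which suffers from exactly the same sign ambiguity; the appendix asserts a definite sign without justification and, as printed, even concludes that $B_s^\ast$ \emph{decreases} in $\beta$, contradicting the corollary it is meant to prove. So your approach is the implicit-function-theorem mirror of the paper's, and both stall at the same point; neither establishes the corollary as an unconditional statement, and the counterexample above shows that some additional hypothesis on the magnitude of the per-unit gains is needed.
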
}
\weip{The proof of Corollary \ref{lem:beta} is given in Appendix C}. 
Corollary \ref{lem:beta} indicates that for the case of a large demand (i.e., $D>M_I$), a more risk-seeking operator (with a larger risk parameter for gain $\beta$) will sense more to achieve a larger gain. 

However, in the small demand case (i.e., $D\leq M_I$), as long as the total available spectrum from sensing can satisfy the demand (i.e., $B_s\alpha\geq D$), a larger sensing decision $B_s$ leads to a constant revenue $D\pi$ but a larger total sensing cost $B_sc_s$, hence a less profit. Therefore, the optimal sensing decision $B_s^*$ is not always increasing with $\beta$ when $D \leq M_I$.

\weipp{\begin{coro}\label{lem:cs}
The optimal sensing decision $B_s^*$ for problem (10) is decreasing in the sensing cost $c_s$. 
\end{coro}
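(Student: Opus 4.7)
The plan is to prove Corollary 3 by a case analysis on the four rows of Table \ref{table:pt}, combined with a boundary-matching argument across rows. The proof rests on three monotonicity facts already established in the text: (i) the cutoff index $\hat{\imath}$, defined by $\alpha_{\hat{\imath}}\leq c_s/c_l<\alpha_{\hat{\imath}+1}$, is non-decreasing in $c_s$; (ii) the decision indicators $M_j$ and $H_j$ are (strictly) decreasing in $c_s$, as stated after Theorem 1; and (iii) the auxiliary function $g_j(B_s)$, which comes from the first-order stationarity of $U(B_s)$ on the interval where realizations $\alpha_i\geq\alpha_{j+1}$ lie in the gain regime and the rest in the loss regime, is pointwise decreasing in $c_s$, since $c_s$ enters each summand only through the linear term $-B_s c_s$ inside the value function $v(\cdot)$ before differentiation.

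Row-by-row, I would first dispatch Rows 1 and 3, where $B_s^*=D/\alpha_{\hat{\imath}+1}$ or $B_s^*=D/\alpha_{j+1}$. For fixed $\hat{\imath}$ (resp.\ fixed $j$) the value is independent of $c_s$, so monotonicity reduces to checking what happens when $c_s$ grows enough to increment $\hat{\imath}$ (resp.\ $j$). By (i), $\alpha_{\hat{\imath}+1}$ then strictly increases, so $B_s^*$ strictly decreases. Row 4 is immediate: $B_s^*=M_I/\alpha_I$ inherits the monotonicity of $M_I$ from (ii). For Row 2, where $B_s^*=g_j^{-1}(0)$, I would apply the implicit function theorem to $g_j(B_s;c_s)=0$. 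Since $g_j$ is decreasing in $B_s$ (given in the text) and decreasing in $c_s$ by (iii), the implicit derivative $\partial B_s^*/\partial c_s=-(\partial g_j/\partial c_s)/(\partial g_j/\partial B_s)$ is strictly negative, so the interior optimizer shifts down as $c_s$ grows.

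The remaining task is to knit these pieces together across region boundaries. As $c_s$ increases, the indicators $H_j$ and $M_j$ drift down, so for a fixed $D$ the active row in Table \ref{table:pt} may change. I would verify that at each such transition, the value of $B_s^*$ does not jump upward. Concretely, when $D$ crosses $M_{j+1}$ from below, Row 3 changes to Row 2 (or vice versa); by definition $M_{j+1}$ is the threshold at which the Row 3 corner $D/\alpha_{j+1}$ coincides with the Row 2 interior root $g_{j+1}^{-1}(0)$, and similarly $H_j$ matches the Row 2 root $g_j^{-1}(0)$ with the Row 3 corner $D/\alpha_{j+1}$. At these matched values the two formulas agree, and combined with the within-row monotonicity just established, continuity of $B_s^*$ in $c_s$ at the transitions (or at worst a downward jump) follows.

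The main obstacle will be Row 2 and its boundaries: verifying that the implicit derivative argument for $g_j^{-1}(0)$ is clean even though $g_j$ involves sums weighted by probability distortions $w(p_i)$ and by powers of $(\pi D-B_s c_s-\max\{D-B_s\alpha_i,0\}c_l-R_p)$ with exponents $\beta$ and $\gamma$ (under Assumption 1). The sign of $\partial g_j/\partial c_s$ has to be pinned down uniformly in $B_s^*>0$, and this is where I would reuse the unimodality of $U(B_s)$ established in Appendix A to conclude that at any interior optimizer, the derivative behavior matches that predicted by $g_j$. Everything else is either a direct substitution or an invocation of (i)--(iii).
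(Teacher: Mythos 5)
Your proposal is correct, but it takes a genuinely different route from the paper. The paper's proof (Appendix D) is a direct single-crossing argument: it differentiates the first-order condition, computes $h_{c_s}(c_s)=\partial U'(B_s)/\partial c_s$ term by term, observes that every summand is negative, and concludes that since $U'(\cdot)$ crosses zero at most once from above (the unimodality from Appendix A), its zero --- and hence $B_s^*$ --- shifts left as $c_s$ grows. You instead read the optimizer off Table \ref{table:pt} row by row and stitch the rows together at the regime boundaries using the identities $g_j\left(\tfrac{D}{\alpha_j}\right)=0\Leftrightarrow D=M_j$ and $g_j\left(\tfrac{D}{\alpha_{j+1}}\right)=0\Leftrightarrow D=H_j$ from (30) and (32). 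Both arguments are sound. The paper's is shorter and sidesteps all the case bookkeeping; yours makes explicit something the paper leaves implicit, namely that $B_s^*$ is continuous (and non-increasing) as the active row of the table changes with $c_s$, and it reuses the already-stated monotonicity of $M_j$ and $H_j$. Note, though, that your step (iii) --- $g_j$ pointwise decreasing in $c_s$ --- is where the real work sits, and it is essentially the same inequality as the paper's: since $U'(B_s)=A(c_s)B_s^{\beta-1}g_j(B_s)$ with $A>0$, at a root of $g_j$ the signs of $\partial g_j/\partial c_s$ and $\partial U'/\partial c_s$ coincide, so your implicit-function-theorem step and the paper's computation of $h_{c_s}<0$ are two phrasings of the same calculation. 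One small detail worth adding if you write this up: when $c_s$ crosses a value $c_l\alpha_{\hat{\imath}+1}$ and the index $\hat{\imath}$ increments, the sums defining $M_j$, $H_j$, and $g_j$ reindex, but the migrating term carries the factor $(c_l\alpha_{\hat{\imath}+1}-c_s)$, which vanishes exactly at the crossing, so no upward jump can occur there either.
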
}
\weip{The proof of Corollary \ref{lem:cs} is given in Appendix D}. Corollary 3 indicates that the operator is more willing to sense when the sensing cost decreases.

For comparison, we also consider the operator's optimal sensing and leasing decisions under the EUT model. \weip{The detailed discussion is given in Appendix E}.

\section{Special Case: Optimal Sensing Decision with Binary Outcomes} \label{sec:pdrp}
In Section IV, we have focused on the discussion of the impact of the parameters of the s-shaped value function. In this section, we consider a special case with binary sensing outcomes (i.e., $I=2$), to further illustrate the impact of reference point and probability distortion on the sensing decision. More specifically, the binary results are $\alpha_1=0$ and $\alpha_2=1$ with probabilities $p_1$ and $p_2$ as in (8) respectively, where $p_1+p_2=1$.    

A practical motivation of this case is the spectrum utilization at those areas such as the subway stations or overpasses, where the traffic patterns of PCs follow a very high peak-valley ratio \cite{wang2015characterizing,wang2015understanding}. In these areas, there is hardly any spectrum left for the virtual operator when the traffic is at its peak, but the situation completely changes during the off-peak hours.
\revoo{Another example is the spectrum used by a rotating radar system \cite{6331681}, where the radar antenna gain and radiation pattern seen by the operator vary in a very high peak-valley ratio with the rotation of main beam. Hence, there are periods of time \weipp{where} the sensing realization is very high}, and other periods of time when the sensing realization is very low.
\weip{To better illustrate the impact of reference point and probability distortion, we consider the case $\beta=\gamma$ in this section to simplify the analysis\footnote{\weip{We do not obtain additional new insights on reference point and probability distortion in the case $\beta\neq\gamma$, hence we omit the discussion of $\beta\neq\gamma$ here.}}.}
%This simplification is widely accepted in areas such as finance and labor market \cite{kahneman_cv00}. 
Hence, by plugging $I=2$ into (9), we can obtain the utility function:
\begin{align}
U&(B_s)=w(p_1)v(\pi D-B_sc_s-Dc_l-R_p)\notag\\
+&w(p_2)v(\pi D\!-\!B_sc_s\!-\!\max\{0,D\!-\!B_s\}c_l\!-\!R_p).
\end{align}

%=======================================================================================
\subsection{Impact of Probability Distortion}  
%=======================================================================================
First, we discuss the impact of probability distortion on the operator's optimal sensing decision $B_s^*$. To compute the optimal sensing decision, we consider the same reference point $R_p=D(\pi-c_l)$ as in Section IV-B, and obtain the following theorem.

\weipp{
\begin{thm}
For the case of binary outcomes, under Assumptions 1 and 2, the optimal sensing decision $B_s^*$ for problem (10) is summarized in Table \ref{table:pdd}. 
\end{thm}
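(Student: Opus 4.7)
The plan is to exploit the simplification that occurs when $R_p = D(\pi - c_l)$ is substituted into (12) together with $\beta = \gamma$, and then carry out a piecewise analysis on the domain $B_s \geq 0$. After substitution, the argument of the first value function becomes $-B_s c_s$, which is always a loss, while the argument of the second value function equals $B_s(c_l - c_s)$ when $B_s \leq D$ and equals $Dc_l - B_s c_s$ when $B_s > D$. This naturally partitions the domain into three regions: $[0,D]$, $[D, Dc_l/c_s]$, and $[Dc_l/c_s, \infty)$, where the signs (gain vs.\ loss) of the two arguments are determined.

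First I would analyze region $[0,D]$. Using $\beta = \gamma$ and pulling out the common factor $B_s^\beta$, the utility collapses to
\begin{equation*}
U(B_s) = B_s^\beta\bigl[\,w(p_2)(c_l - c_s)^\beta - \lambda\, w(p_1) c_s^\beta\,\bigr],
\end{equation*}
which is strictly monotone on $(0,D]$ with the sign of the bracketed constant determining the direction. Thus the regional maximum is attained at $B_s = D$ when $w(p_2)(c_l - c_s)^\beta \geq \lambda\, w(p_1) c_s^\beta$, and at $B_s = 0$ otherwise. This step already surfaces the distorted-probability threshold that I expect to appear in Table~\ref{table:pdd}.

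Next I would show that on both $[D, Dc_l/c_s]$ and $[Dc_l/c_s, \infty)$ the utility $U(B_s)$ is strictly decreasing. On the middle region, the loss term $-\lambda w(p_1)(c_s B_s)^\beta$ becomes more negative while the gain term $w(p_2)(Dc_l - c_s B_s)^\beta$ shrinks towards zero; a direct computation of $U'(B_s)$ shows that both contributions have negative derivative. On the last region both terms have become losses and grow in magnitude, so $U'(B_s) < 0$ as well. Consequently, no interior local maximum can exist beyond $B_s = D$, and the global maximizer is one of $\{0, D\}$, reducing the problem to comparing $U(0) = 0$ with $U(D) = D^\beta\bigl[w(p_2)(c_l - c_s)^\beta - \lambda w(p_1) c_s^\beta\bigr]$. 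The resulting threshold condition, expressed in terms of $w(p_1), w(p_2), c_s, c_l, \lambda$, and $\beta$, yields the rows of Table~\ref{table:pdd}, and via the Stage~II rule in (6) it also pins down $B_l^*$.

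The main obstacle is showing that no unexpected local maximum lives inside the middle region $[D, Dc_l/c_s]$, because there the gain term is still positive and concave while the loss term is negative and concave, so a naive convexity argument is not available. I would handle this by differentiating explicitly and observing that both derivative contributions carry the same (negative) sign after factoring, using $c_s < c_l$ from Assumption~2. A secondary subtlety is the non-smoothness at $B_s = D$ and at $B_s = Dc_l/c_s$: I would verify continuity of $U$ at these kink points and argue that the one-sided derivatives are consistent with the global monotonicity claim, so that the comparison of $U(0)$ and $U(D)$ truly determines the optimum across the full non-negative ray.
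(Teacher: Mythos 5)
Your proposal is correct and follows essentially the same route as the paper's proof: restrict attention to $[0,D]$ because $U$ is decreasing for $B_s>D$, observe that on $[0,D]$ the utility collapses to $\bigl[w(p_2)(c_l-c_s)^\beta-\lambda w(p_1)c_s^\beta\bigr]B_s^\beta$, and read off the threshold by comparing $U(0)$ with $U(D)$. Your explicit three-region split at $D$ and $Dc_l/c_s$ is just a more careful justification of the paper's one-line monotonicity claim for $B_s>D$, and your threshold (with the factor $\lambda$) matches the inequality derived in the paper's appendix.
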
}

%%%%%%%%%% ttt 2%%%%%%%%%%%%%%%%%%%%%%
\begin{table}[t]  
\centering 
\vspace{-2mm}
\caption{Optimal Sensing Decision under different $\mu$}
\vspace{-2mm}
\begin{tabular}{|c|c|}
\hline
\textbf{Condition}&\textbf{Optimal Sensing Decision} $B_s^*$\\ \hline
$\frac{c_l}{c_s}\leq\left(\frac{w\left(p_1\right)}{w\left(p_2\right)}\right)^{\frac{1}{\beta}}+1$&$B_s^*=0$\\ \hline
$\frac{c_l}{c_s}>\left(\frac{w\left(p_1\right)}{w\left(p_2\right)}\right)^{\frac{1}{\beta}}+1$&$B_s^*=D$\\ \hline
\end{tabular} \label{table:pdd}
\vspace{-4mm}
\end{table}

The proof of Theorem 2 is given in Appendix F. The results in Table \ref{table:pdd} depend on the ratio $w(p_1)/w(p_2)$. The operator will sense $B_s^*=D$ when the leasing and sensing cost ratio ${c_l}/{c_s}$ exceeds the sensing threshold $\left[w\left(p_1\right)/w\left(p_2\right)\right]^{\frac{1}{\beta}}+1$, and will sense $B_s^*=0$ otherwise. Since $p_1+p_2=1$, according to the probability distortion effect in PT, the larger probability is underweighed and the smaller probability is overweighed. From Table \ref{table:pdd}, for the case $p_1<p_2$, $p_1$ is overweighted and $p_2$ is underweighted in PT (i.e., $w(p_1)/w(p_2)>p_1/p_2$), where the operator \emph{underestimates} the chance of having a high sensing realization. 
%due to the \emph{certainty effect} \cite{kahneman2011thinking}. 
%Thus, it is more likely to have $B_s^*=0$. On the other hand, when $p_1>p_2$, $p_2$ is overweighted, where the operator \emph{overestimates} the chance of a high sensing realization.
%due to the \emph{possibility effect} \cite{kahneman2011thinking}. 
%Thus, it is more likely to have $B_s^*=D$. 
Thus, the PT operator is more risk-averse and will sense less than or equal to an EUT operator because it has a higher sensing threshold. On the other hand, when $p_1>p_2$, $p_2$ is overweighted (i.e., $w(p_1)/w(p_2)<p_1/p_2$), where the operator \emph{overestimates} the chance of a high sensing realization. Thus, the PT operator is more risk-seeking and will sense larger than or equal to that of an EUT operator because the PT operator has a lower sensing threshold.

%=======================================================================================
\subsection{Impact of Reference Point}  
%=======================================================================================
Next, we discuss the impact of reference point $R_p$ on the operator's optimal sensing decision $B_s^*$. A high reference point $R_p$ indicates that the operator has a high expectation on the profit, and it is more likely to experience a \emph{loss} since the outcome is often less than its expectation. On the other hand, a low reference point $R_p$ indicates that the operator has a low expectation, and it is more likely to experience a \emph{gain} since the outcome is often beyond its expectation. As we will see in the following, whether an outcome is considered as a loss or a gain can significantly affect the operator's subjective valuation of the outcome, and hence its sensing decision.

To better illustrate the impact of reference point, we focus on two choices: (i) A high reference point $R_p^{\text{high}}=D(\pi -c_s)$, which reflects the operator's expectation of realizing all the sensing spectrum $D$. (ii) the low reference point $R_p^{\text{low}}=D(\pi-c_l-c_s)$, which reflects the operator's expectation of not realizing any of the sensing spectrum $D$. In other words, the same outcome is more likely to be considered as a loss under the high reference point than under the low reference point. 

From (12), we obtain the utility under the high reference point as
\begin{align}
U_{R\!P\!H}\left(B_s\right)\!=&\!-\lambda w\left(p_1\right)\left(B_sc_s+Dc_l-Dc_s\right)^\beta\notag\\
&-\lambda w\left(p_2\right)\left[D\left(c_l\!-\!c_s\right)-B_s\left(c_l\!-\!c_s\right)\right]^\beta,
\end{align}
and the utility under the low reference point as
\begin{align}
U_{R\!P\!L}\!\!\left(\!B_s\!\right)\!\!=\!\!w\!\left(\!p_1\!\right)\!\left(\!Dc_s\!\!-\!\!B_sc_s\!\right)^\beta\!\!\!+\!w\!\left(\!p_2\!\right)\left[B_s\left(\!c_l\!-\!c_s\!\right)\!\!+\!\!Dc_s\right]^\beta\!\!.
\end{align}

By studying the first order derivatives under the two reference points $U_{RPH}'(B_s)$ and $U_{RPL}'(B_s)$, we can compute the optimal sensing decision that solves problem (10) in the following theorem.
\weipp{
\begin{thm}
For the case of binary outcomes, under Assumptions 1 and 2, the optimal sensing decision $B_s^*$ for problem (10) under different reference points $R_p^{high}=D(\pi-c_s)$ and $R_p^{low}=D(\pi-c_l-c_s)$ are summarized in Table \ref{table:rpp}. 
\end{thm}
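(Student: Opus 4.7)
The plan is to exploit the \emph{key observation} that the two reference points in this section are chosen precisely so that the s-shape of $v(\cdot)$ never actively bites on the relevant range of $B_s$. Under $R_p^{high}=D(\pi-c_s)$, Assumption 2 (which says $c_l>c_s$) forces every argument of $v(\cdot)$ in (9) to be non-positive on $[0,D]$, so only the loss branch $v(x)=-\lambda(-x)^\beta$ enters; under $R_p^{low}=D(\pi-c_l-c_s)$ every argument is non-negative on $[0,D]$, so only the gain branch $v(x)=x^\beta$ enters. Because $v$ is convex on the loss side and concave on the gain side and its arguments are affine in $B_s$, this reduces the task to maximizing a \emph{convex} function for $R_p^{high}$ and a \emph{concave} function for $R_p^{low}$, each of which admits a routine analysis.

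For $R_p^{high}$ I would start from (13), verify directly that $U_{R\!P\!H}''(B_s)>0$ on $[0,D]$, and separately show that $U_{R\!P\!H}$ is strictly decreasing on $[D,\infty)$ (on that range the profit is still entirely a loss and extra sensing just widens both gaps to the reference point). Consequently the global optimum lies at $B_s=0$ or $B_s=D$, and a direct evaluation yields $U_{R\!P\!H}(0)=-\lambda(D(c_l-c_s))^{\beta}(w(p_1)+w(p_2))$ and $U_{R\!P\!H}(D)=-\lambda w(p_1)(Dc_l)^\beta$. Comparing these two and isolating the ratio $c_s/c_l$ produces a clean threshold against $(w(p_1)/(w(p_1)+w(p_2)))^{1/\beta}$, which splits the $R_p^{high}$ regime into the ``$B_s^*=0$'' and ``$B_s^*=D$'' rows of Table \ref{table:rpp}.

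For $R_p^{low}$ I would start from (14); since $0<\beta<1$ each summand is a strictly concave power function of an affine argument, so $U_{R\!P\!L}$ is strictly concave on $[0,D]$ with at most one interior critical point (and is decreasing for $B_s>D$). Setting $U_{R\!P\!L}'(B_s)=0$ and rearranging the resulting power equation gives
\[
\frac{(c_l-c_s)B_s+Dc_s}{c_s(D-B_s)} \;=\; K,\qquad K:=\left(\frac{w(p_2)(c_l-c_s)}{w(p_1)\,c_s}\right)^{\!1/(1-\beta)},
\]
from which the unique interior candidate $B_s^{\circ}=Dc_s(K-1)/[(c_l-c_s)+Kc_s]$ follows in closed form. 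Feasibility $B_s^{\circ}\in(0,D)$ is equivalent to $K>1$, which simplifies (via Assumption 2) to the clean inequality $c_l/c_s>1+w(p_1)/w(p_2)$; when this fails, $U_{R\!P\!L}'(0)\leq 0$ and $B_s^*=0$. These two sub-cases populate the remaining rows of Table \ref{table:rpp}.

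The \emph{main obstacle} I anticipate is not the calculus but the domain bookkeeping. One must carefully check that neither the kink in $\max\{D-B_s,0\}$ inside the profit, nor the kink of $v(\cdot)$ at zero, ever falls in the interior of the relevant sensing range; otherwise the global convexity (respectively concavity) structure that the argument relies on would be lost and one would be thrown back into the general non-convex setting of Theorem \ref{thm:pt_table}. Once this bookkeeping is in place, the rest reduces to evaluating two closed-form expressions at the endpoints or at a single interior stationary point.
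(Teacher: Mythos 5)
Your treatment of $R_p^{\text{low}}$ is correct and is essentially the paper's own argument: after confining $B_s$ to $[0,D]$ (both utilities decrease beyond $D$), every summand of $U_{RPL}$ is a gain, $U_{RPL}$ is strictly concave, its derivative crosses zero at most once, and the sign of $U_{RPL}'(0^+)$ --- the sign of $w(p_2)(c_l-c_s)-w(p_1)c_s$, i.e.\ the comparison $c_l/c_s \lessgtr 1+w(p_1)/w(p_2)$ --- decides between $B_s^*=0$ and the interior stationary point $U_{RPL}^{'-1}(0)$; your closed form for that stationary point is a correct bonus. The domain bookkeeping you flag as the main obstacle also checks out: on $[0,D]$ all arguments of $v$ are non-positive under $R_p^{\text{high}}$ and non-negative under $R_p^{\text{low}}$, exactly as reflected in (13) and (14).

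The problem is the $R_p^{\text{high}}$ column, where what you prove is not the statement you were asked to prove. Table \ref{table:rpp} has no ``$B_s^*=0$'' row under $R_p^{\text{high}}$: it asserts $B_s^*=U_{RPH}^{'-1}(0)$ when $c_l/c_s<1+w(p_1)/w(p_2)$ and $B_s^*=D$ otherwise, i.e.\ the \emph{same} threshold as the $R_p^{\text{low}}$ column, obtained from the sign of $U_{RPH}'(0^+)$ in (47) together with $U_{RPH}'(D^-)=+\infty$. Your endpoint comparison of $U_{RPH}(0)=-\lambda\bigl(w(p_1)+w(p_2)\bigr)\bigl(D(c_l-c_s)\bigr)^\beta$ against $U_{RPH}(D)=-\lambda w(p_1)(Dc_l)^\beta$ yields the different threshold $c_s/c_l\lessgtr 1-\bigl(w(p_1)/(w(p_1)+w(p_2))\bigr)^{1/\beta}$ and a $\{0,D\}$ dichotomy, so it populates a different table. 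To be clear, your convexity observation is sound --- the paper's own proof states $\partial^2 U_{RPH}/\partial B_s^2>0$, and a convex function on an interval is not maximized at an interior zero of its derivative (that point is the local minimum in the regime where $U_{RPH}'$ changes sign) --- so your boundary argument is the logically forced continuation of the paper's premise and sits in genuine tension with the $U_{RPH}^{'-1}(0)$ entry of the table. But as submitted, your proof (i) misdescribes the contents of Table \ref{table:rpp} and (ii) reaches a conclusion that contradicts it in the regime $c_l/c_s<1+w(p_1)/w(p_2)$, so it does not establish Theorem 3 as stated; you would need either to reconcile the endpoint comparison with the table's entries or to state explicitly that you are proving a corrected version of the table.
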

}
%%%%%%%%%% ttt 2%%%%%%%%%%%%%%%%%%%%%%
\begin{table}[t]  
\centering 
\vspace{-2mm}
\caption{Optimal Sensing Decision $B_s^*$ under $R_p^{\text{high}}=D(\pi\!-\!c_s)$ and $R_p^{\text{low}}=D(\pi\!-\!c_l\!-\!c_s)$}
\vspace{-2mm}
\begin{tabular}{|c|c|c|}
\hline
\multirow{2}{*}{\textbf{Condition}}&$B_s^*$ \textbf{under} $R_p^{\text{high}}$  &$B_s^*$ \textbf{under} $R_p^{\text{low}}$ \\ 
&(i.e., Risk-Seeking)&(i.e., Risk-Averse)\\\hline
$\frac{c_l}{c_s}\!<\!1\!+\!\frac{w(p_1)}{w(p_2)}$&$B_s^*=U_{RPH}^{'-1}(0)$&$B_s^*=0$\\ \hline
$\frac{c_l}{c_s}\!\geq\! 1\!+\!\frac{w(p_1)}{w(p_2)}$&$B_s^*=D$&$B_s^*=U_{RPL}^{'-1}(0)$\\ \hline

\end{tabular} \label{table:rpp}
\vspace{-4mm}
\end{table}

\begin{figure*}[t]
%\begin{tabular}{cc}   
\begin{minipage}{0.32\linewidth}
  \centerline{\includegraphics[width=5.5cm]{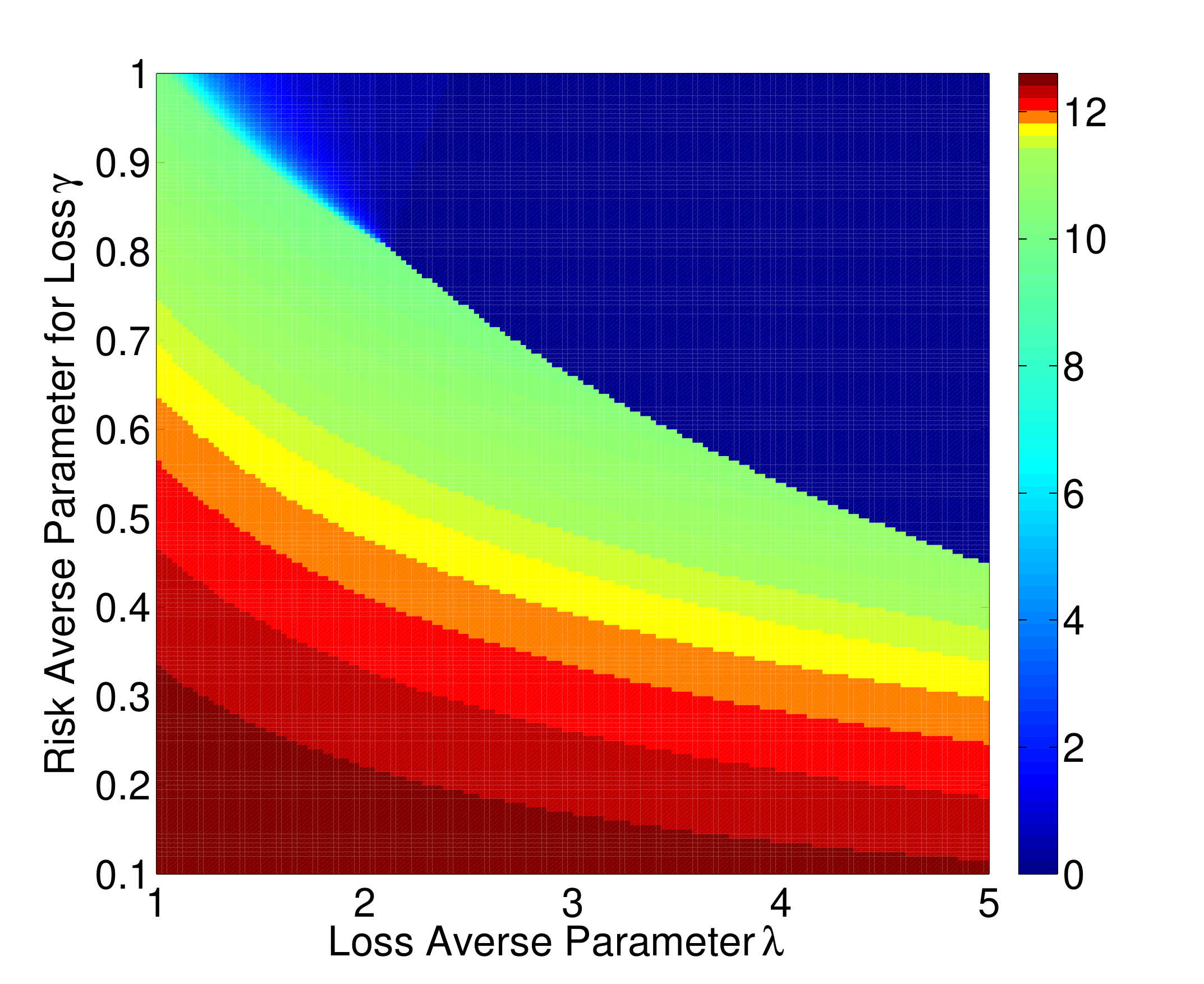}}
  \vspace{-2mm}
  \centerline{\weipp{(a)} $\beta=0.8$, $R_p=(\pi-c_l)D$}
\end{minipage}
\hfill
\begin{minipage}{0.32\linewidth}
  \centerline{\includegraphics[width=5.5cm]{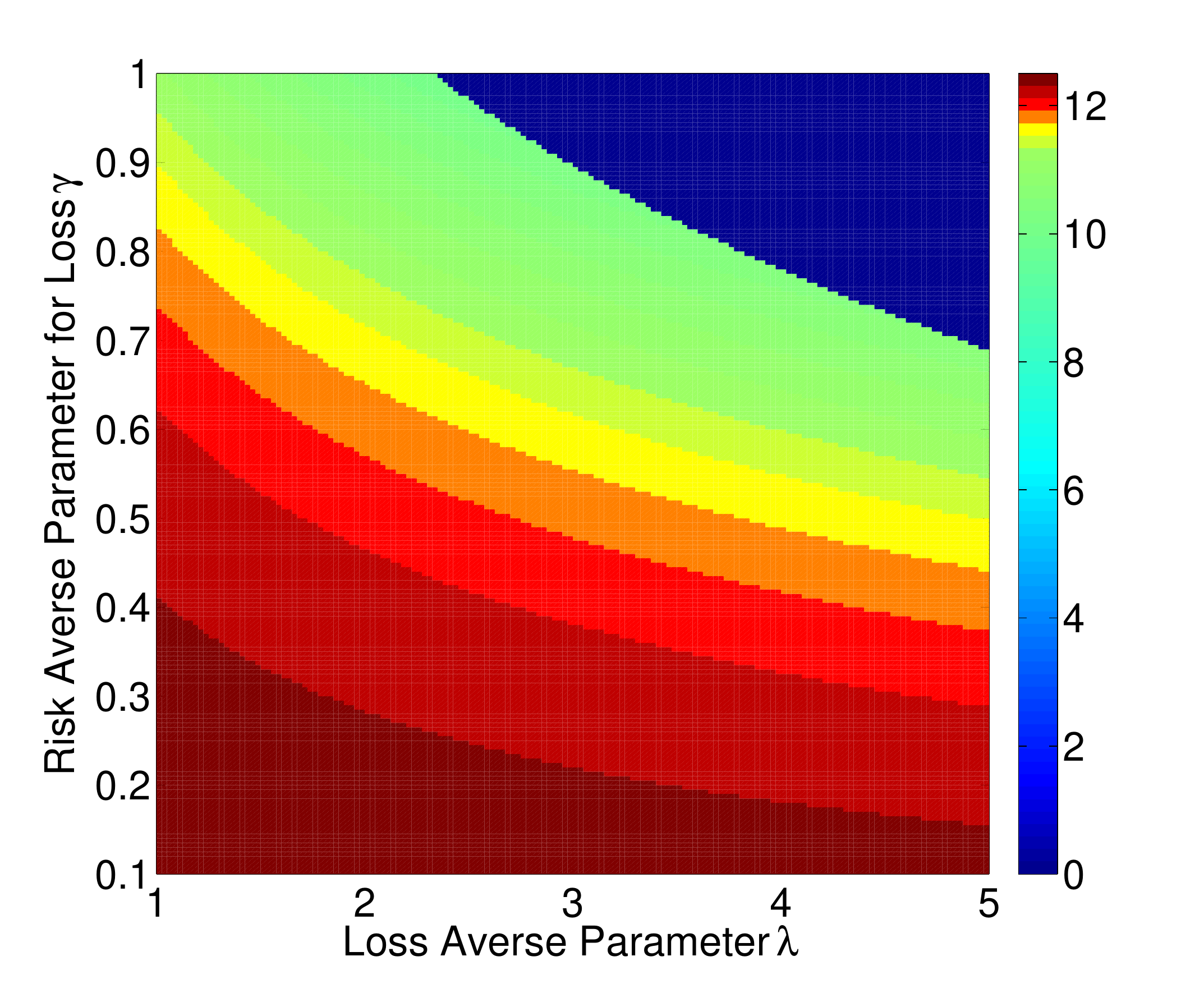}}
    \vspace{-2mm}
  \centerline{\weipp{(b)} $\beta=1$, $R_p=(\pi-c_l)D$}
\end{minipage}
\hfill
\begin{minipage}{0.32\linewidth}
  \centerline{\includegraphics[width=5.5cm]{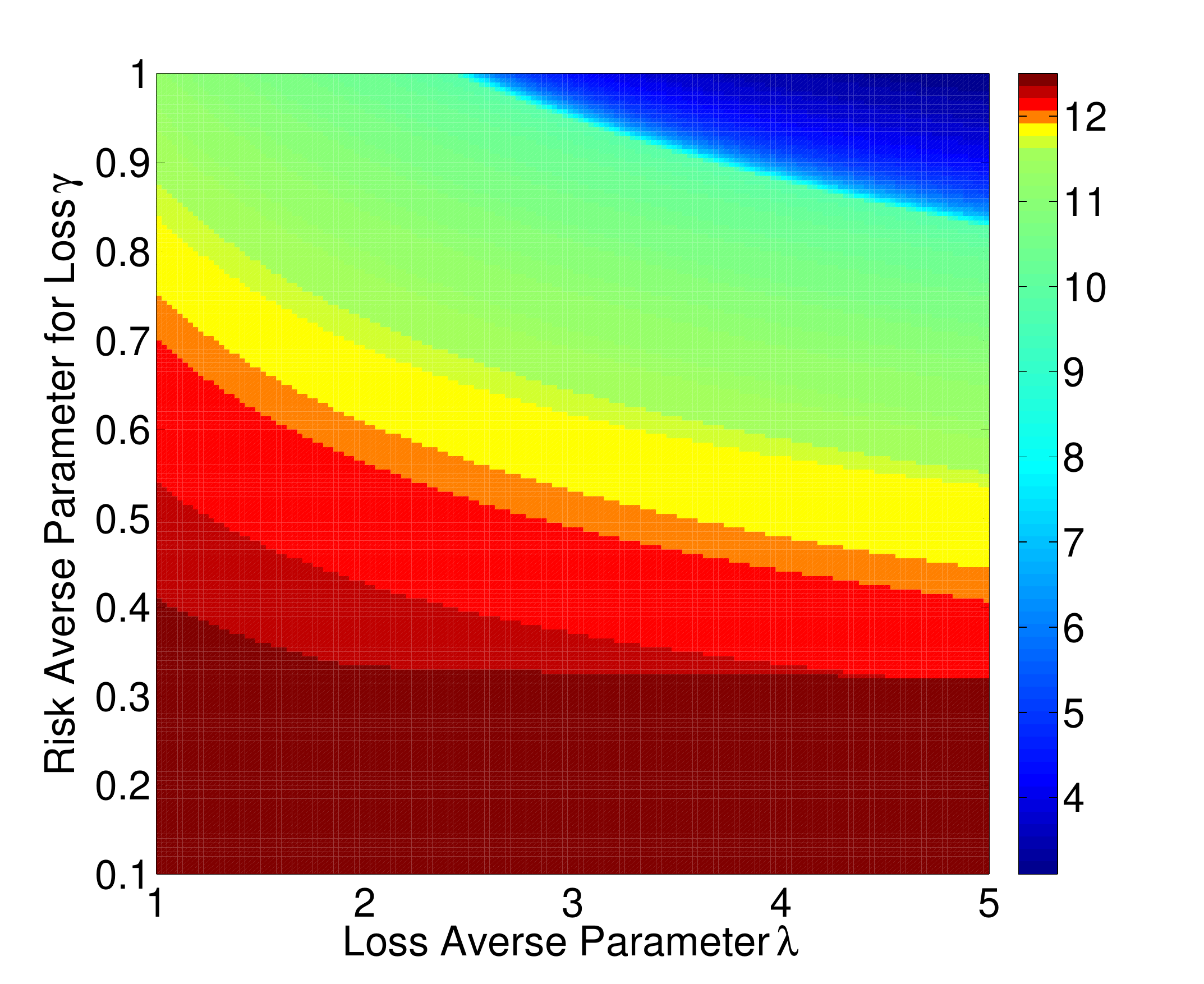}}
    \vspace{-2mm}
  \centerline{(c) $\beta=1$, $R_p=(\pi-c_s)D$}
\end{minipage}
\vfill
	\caption{Optimal sensing decision $B_s^*$ versus $\gamma$ and $\lambda$ for different $\beta$ and $R_p$. Other parameters are $c_l=5$, $c_s=2$, and $D=10$. (The color represents optimal sensing decision $B_s^*$.)}
\label{fig:lambda}
\vspace{-8mm}
\end{figure*}  
\weip{The proof of Theorem 3 is given in Appendix G.} Theorem 3 indicates that $B_s^*$ under $R_p^{\text{high}}$ is always larger than $B_s^*$ under $R_p^{\text{low}}$. This means that an operator with $R_p^{\text{high}}$ is more willing to sense \weipp{compared} to an operator with $R_p^{\text{low}}$. 
When an operator  has a high expectation (due to a high reference point), it is more likely to encounter losses than gains under uncertainty. \weip{Since the operator's valuation function $v(x-R_p)$ is convex
in the loss region, it will sense more in order to gain more in the case of high sensing realization (i.e., $\alpha=1$).} In contrast, when an operator has a low expectation (due to a low reference point), it is more likely to encounter gains than losses. Since the operator's valuation function $v(x-R_p)$ is concave in terms of gains, it will sense less, in order to avoid the risk of low sensing realization (i.e., $\alpha=0$). \revo{To summarize, an operator who expects a higher profit is more risk-seeking, and an operator who expects a lower profit is more risk-averse. }

%==============================================================================
\section{Performance Evaluations} \label{sec:pe} %Performance Evaluations
%============================================================================== 

  In this section, we illustrate the operator's optimal sensing decision and the corresponding expected profit  under different system parameters. The key insights under the PT modeling include: \rev{(a) A \emph{risk-averse} operator will sense less and lease more, which leads to a smaller profit with a lower risk of loss; while a \emph{risk-seeking} operator will sense more and lease less, which leads to a larger profit with a higher risk of loss. (b) \emph{Risk preference} changes with the probability of high sensing realization. When the probability of high sensing realization changes from very high to very low, the operator changes from risk-averse to risk-seeking. (c) Both \emph{risk-averse} and \emph{risk-seeking} operators face a tradeoff between satisfying their risk preferences and maximizing expected profit. A \emph{risk-averse} operator achieves a better tradeoff in a high sensing cost scenario than in a low sensing cost scenario, while a \emph{risk-seeking} operator achieves a better tradeoff in a low sensing cost scenario than in a high sensing cost scenario.}
  
\subsection{Evaluation of the Optimal Sensing Decision}
%=======================================================================================
We first evaluate the impact of the three characteristics of PT on the operator's optimal decision. 
\subsubsection{Impact of s-shaped Value Function}  
%=======================================================================================
First, we illustrate the operator's optimal sensing decision under different parameters of s-shaped value function (i.e., $\gamma$, $\beta$, and $\lambda$), assuming reference point $R_p=(\pi-c_l)D$ and linear probability distortion (i.e., $\mu=1$). We compare the optimal sensing decision and the corresponding expected profit with the EUT benchmark, where $\lambda=\beta=\gamma=1$. 

First, in Fig. \ref{fig:lambda}(a) and Fig. \ref{fig:lambda}(b), we study the optimal sensing decision $B_s^*$ against the loss penalty parameter $\lambda$ and the risk averse parameter for loss $\gamma$. To illustrate the impact of $\beta$, we set $\beta=0.8$ in Fig. \ref{fig:lambda}(a) and $\beta = 1$ in Fig. \ref{fig:lambda}(b). 
  %We can obtain whether an operator is aggressive or conservative with a certain pair of $\lambda$ and $\gamma$ values.   
  The other system parameters are fixed at $c_l=5$, $c_s=2$, and $D=10$. 
  We observe the behaviors of both risk-averse and risk-seeking operators in Fig. \ref{fig:lambda}(a) and Fig. \ref{fig:lambda}(b). The upper right parts of the figures \weipp{correspond} to the risk-averse operators, and the lower left parts of the figures are risk-seeking operators. The EUT benchmark corresponds to the case of $\lambda = \beta = \gamma = 1$ (i.e., upper left corners of Fig. 3(b) and Fig. 3(c)). We observe that risk-averse operators sense less and risk-seeking operators sense more.
  
  \textbf{Impact of $\gamma$ on $B_s^*$:} We can see that for fixed $\beta$ and $\lambda$, the operator senses more when $\gamma$ decreases (see y-axis in Fig. \ref{fig:lambda}(a) and Fig. \ref{fig:lambda}(b)), as stated in Corollary 1. \revo{The intuition is that when $\gamma$ decreases from 1 (hence the operator is more risk-seeking than an EUT operator), the operator experiences less marginal disutility from loss. In order to win a potentially large gain, the operator is more willing to take risk and sense more. An example to illustrate this impact is that a gambler, who has already lost a lot, cares less of losing an additional \$$1$ than a gambler who just starts to gamble. }

\textbf{Impact of $\lambda$ on $B_s^*$:} As stated in Corollary 1, for fixed $\gamma$ and $\beta$, the operator senses less when $\lambda$ increases (see the x-axis in Fig. \ref{fig:lambda}(a) and Fig. \ref{fig:lambda}(b)). The intuition is that when $\lambda$ is larger, the penalty of loss to the operator is larger (hence the operator is more risk-averse than an EUT operator). In order to avoid a potential loss, the operator will sense less.

\textbf{Impact of $\beta$ on $B_s^*$:} By comparing Fig. \ref{fig:lambda}(a) and Fig. \ref{fig:lambda}(b), we can observe that the operator senses more when $\beta$ increases for fixed $\lambda$ and $\gamma$, which verifies Corollary 2.  \revo{The intuition is that when $\beta$ decreases from 1 (hence the operator is more risk-averse than an EUT operator), the operator experiences less marginal utility from the same gain. Hence, the operator will sense less to achieve a certain gain, rather than taking risk for a very large gain. An example to illustrate this impact is that a rich man is less willing to earn an additional \$$1$ than a poor man if doing so requires a fixed amount of effort. }
  
%=======================================================================================
\subsubsection{Impact of Reference Point}  
%=======================================================================================
In Fig. \ref{fig:lambda}(b), we have considered the reference point of $R_p=D(\pi-c_l)$. Next, we further illustrate the operator's optimal sensing decision under another high reference point of $R_p=D(\pi-c_s)$. 
%A high reference point $R_p$ indicates that the operator has a high expectation on the profit, and it considers an outcome as a \emph{loss} if it is less than its expectation. On the other hand, a low reference point $R_p$ indicates that the operator has a low expectation, and it considers an outcome as a \emph{gain} if it is above its expectation. 
\weip{As stated in Section V-B, whether an outcome is considered a loss and gain will significantly affect the operator's subjective valuation of the outcome, hence will affect its sensing decision.}

In Fig. \ref{fig:lambda}(c), we consider the case of $R_p=D(\pi-c_s)$, which reflects the operator's expectation of realizing all of the sensing spectrum $D$. Hence the same outcome is more likely to be considered as a loss under $R_p = D(\pi-c_s)$ than under $R_p=D(\pi-c_l)$. We plot the optimal sensing decision of the risk-seeking and risk-averse operators for different values of $\lambda$ and $\gamma$. The other system parameters are the same as those in Fig. \ref{fig:lambda}(b). By comparing Fig. \ref{fig:lambda}(b) with Fig. \ref{fig:lambda}(c), we observe that the operator with a low reference point $R_p=D(\pi-c_l)$ senses less. The intuition is that when an operator has a low reference point, it has a low expectation, so it is more likely to encounter gains than losses. Due to the concavity of its valuation function $v(x-R_p)$ in gain, the operator will become more risk-averse and will sense less to avoid the risk of low sensing realization.

	%%%%%%%%%%%%%%%%%%%%%%%%%%%%%%%%%%%%%%%%%%%%%%%%%%%%
		\begin{figure}[t]
	\centering
		\includegraphics[width=0.38\textwidth]{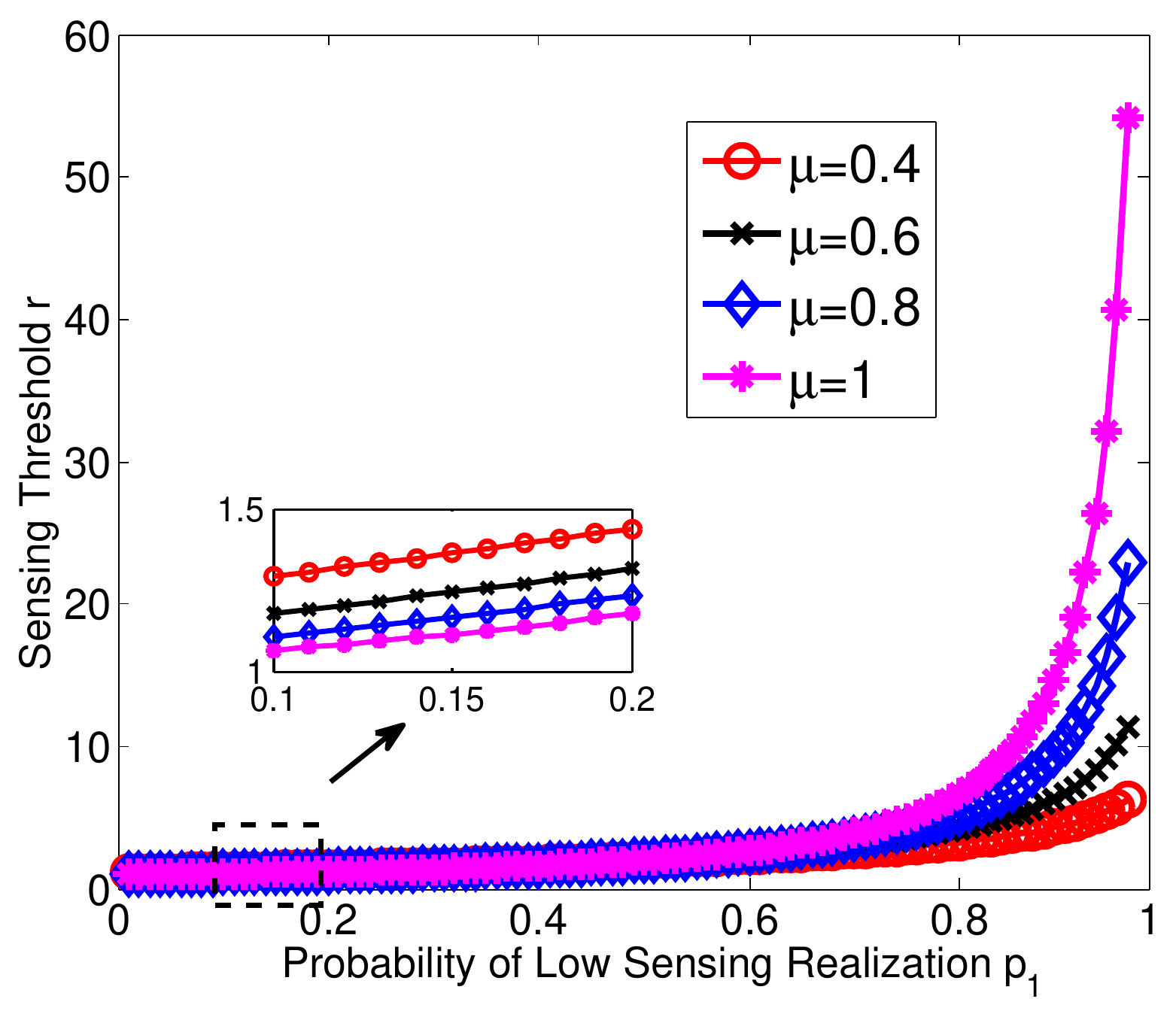}
		\vspace{-4mm}
	\caption{Sensing threshold $r$ versus probability of low sensing realization $p_1$ with different probability distortion parameter $\mu$ for $\beta = \gamma = \lambda = 1$, and $R_p = D(\pi -  c_l)$.}
	\label{fig:mu}	
	\vspace{-6mm}
\end{figure}

%%%%%%%%%%%%%%%%%%%%%%%%%%%%%%%%%%%%%%%%%%%%%%%%%%%%%%%%%%%

%=======================================================================================
\subsubsection{Impact of Probability Distortion}  
%=======================================================================================
Then, we illustrate the operator's optimal sensing decision under different probability distortion parameters $\mu$ for the case of binary sensing outcomes (i.e., $I=2$). We compare the result with the non-distorted benchmark, where $\mu=1$. \weipp{The PT operator is \emph{risk-seeking} when the probability of low sensing realization $p_1$ is high, and it is \emph{risk-averse} when $p_1$ is low.}

We notice that the operator's decisions can be characterized by a threshold $r$ related to the leasing and sensing cost ratio $\frac{c_l}{c_s}$. When $\frac{c_l}{c_s} \geq r$, meaning that leasing is expensive, the operator will choose to sense for all the demand $D$. Otherwise, when $\frac{c_l}{c_s} < r$, the operator will sense less than the demand $D$ and lease for part of the demand\footnote{\weipp{Notice that the threshold $r$ is different under different scenarios. For example, with $R_p=D(\pi-c_l)$ for binary sensing outcomes, we have $r = \left[w\left(p_1\right)/w\left(p_2\right)\right]^{\frac{1}{\beta}}+1$ from Theorem 2. On the other hand, with $R_p=D(\pi-c_s)$ for binary sensing outcomes, we have $r=1+w(p_1)/w(p_2)$ from Theorem 3.}}. \wei{Hence, a larger $r$ means that the operator is less willing to sense.}

In Fig. \ref{fig:mu}, we plot the sensing threshold $r$ against $p_1$ for different values of $\mu$, where we assume $\beta=\gamma=\lambda=1$, and $R_p=D(\pi-c_l)$. We can see that the threshold $r$ decreases in $\mu$ when $p_1<0.5$, and increases in $\mu$ when $p_1>0.5$. As a smaller $\mu$ means that the operator will overweigh the low probability more, it becomes more risk-averse when $p_1$ is small. Similarly, since a smaller $\mu$ means that the operator will underweigh the high probability more, it is more risk-seeking when $p_1$ is large. 

%=======================================================================================
\subsection{Expected Profit and Risk Preference Tradeoff}  
%=======================================================================================
\begin{figure}[t]
%\begin{tabular}{cc}   
\begin{minipage}{0.47\linewidth}
  \centerline{\includegraphics[width=4.7cm]{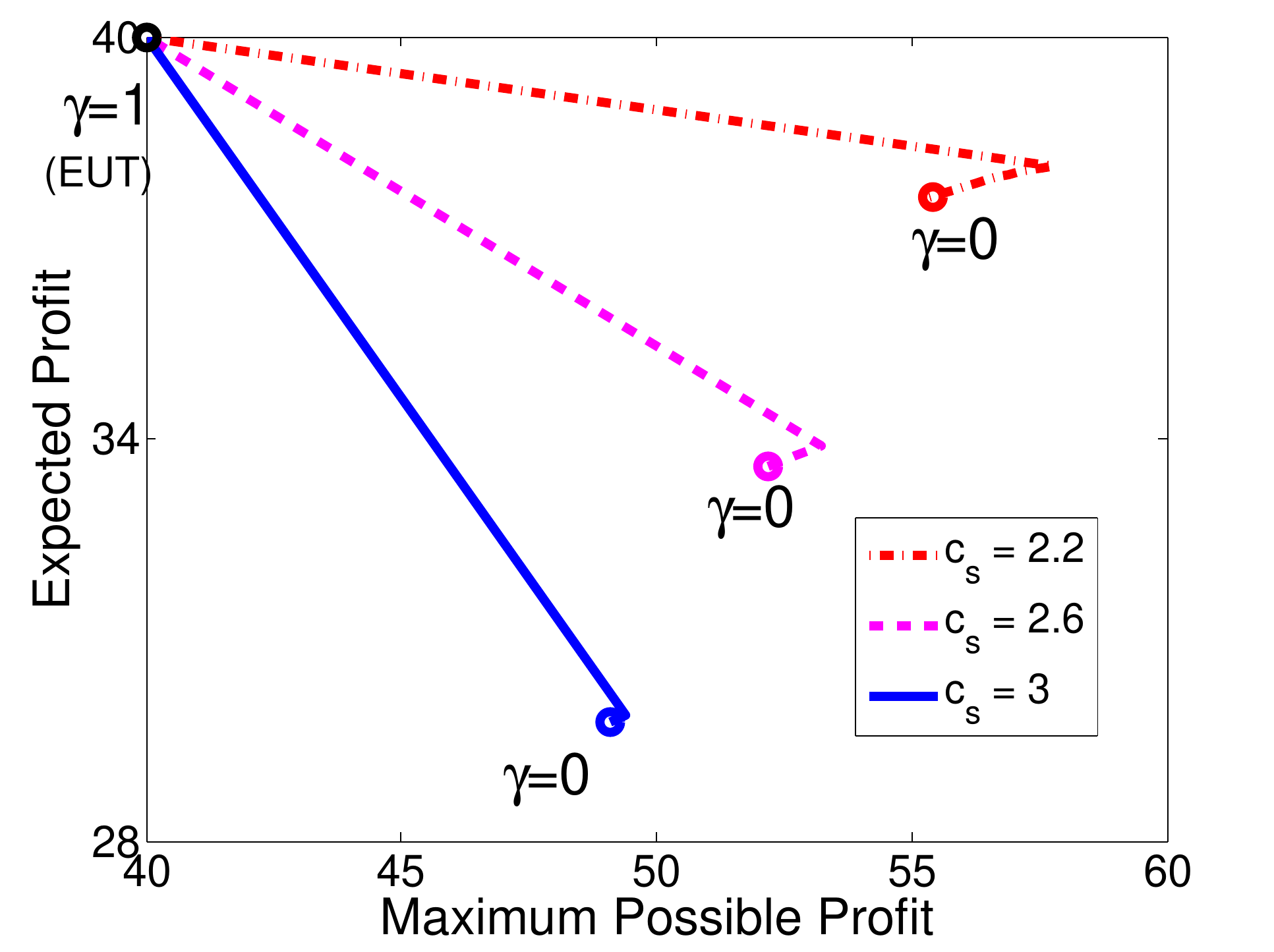}}
  %\centerline{(a) $\mathbb{E}[R(B_s^*,B_l^*,\alpha)]$ versus $R(B_s^*,B_l^*,\alpha=1)$}
\end{minipage}
\hfill
\begin{minipage}{0.52\linewidth}
  \centerline{\includegraphics[width=4.7cm]{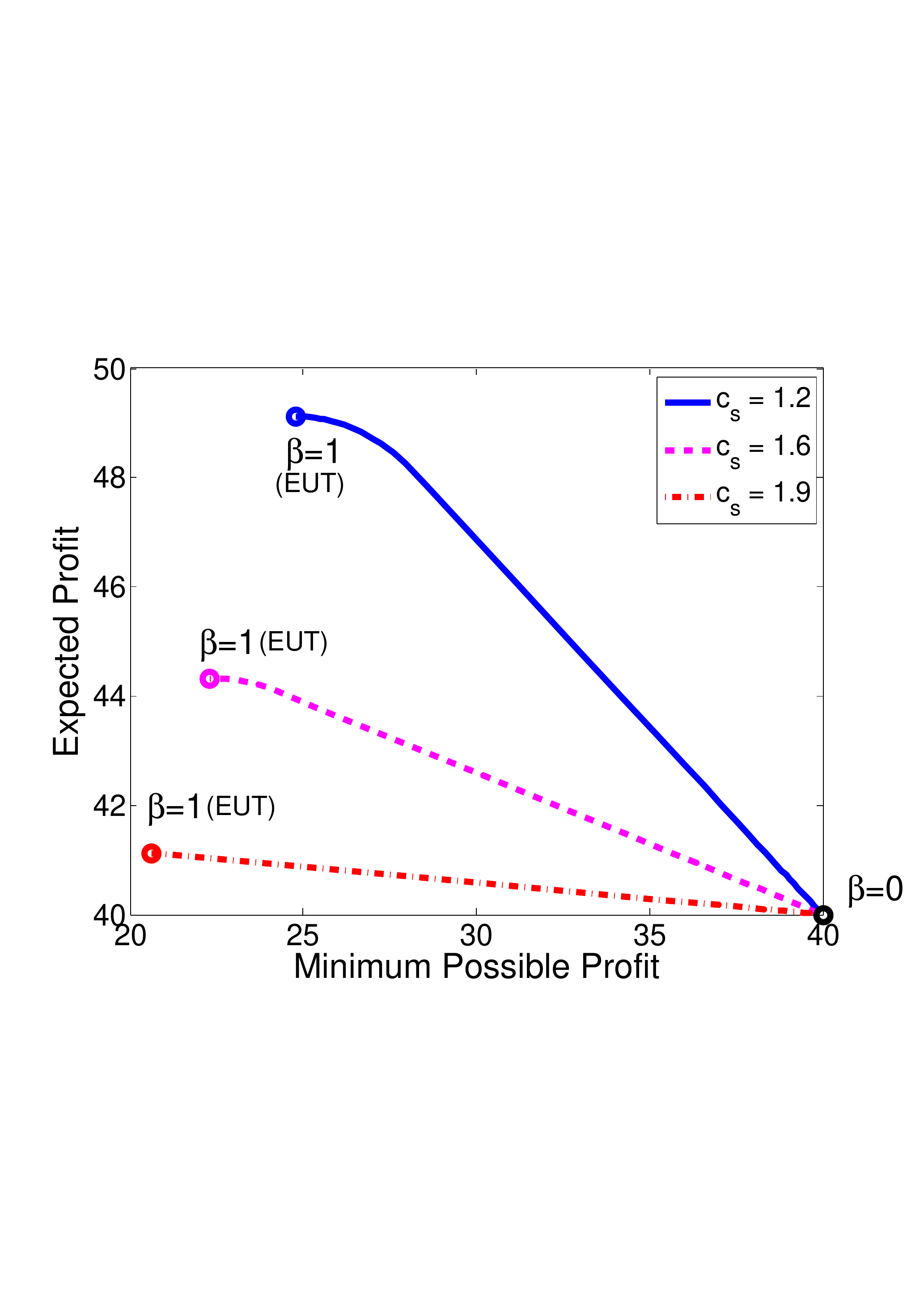}}
  %\centerline{(b) $\mathbb{E}[R(B_s^*,B_l^*,\alpha)]$ versus $R(B_s^*,B_l^*,\alpha=0)$}
\end{minipage}
\vfill
\vspace{-1mm}
\caption{\revo{(a) Tradeoff between optimal expected profit and maximum possible profit for $\beta = 1$ under different $c_s$ and $\gamma$. 
(b) Tradeoff between optimal expected profit and minimum possible profit for $\gamma = 1$ under different $c_s$ and $\beta$. 
Other parameters are $R_p=(\pi-c_l)D$, $c_l=4$, $\pi=8$, and $D=10$.}}
\label{fig:lambdaer}\vspace{-5mm}
\end{figure}

\revoo{We then evaluate the tradeoff between the expected profit and risk preference of an operator.
A risk-seeking operator is aggressive and mainly interested in earning a high maximum profit, while a risk-averse operator is conservative and mainly interested in guaranteeing a high minimum profit.
Given the system parameters (i.e., $c_s$, $c_l$, $\pi$, and $D$) and risk perference parameters (i.e., $\lambda$, $\beta$, and $\gamma$), we let $B_s^*$ and $B_l^*$ be the optimal sensing and leasing \weipp{decisions} discussed in Section IV. The optimal expected profit $\mathbb{E}_{\alpha}[R(B_s^*,B_l^*,\alpha)]$ is the averaged profit over different sensing realizations. The maximum possible profit $R(B_s^*,B_l^*,1)$ is the profit with $\alpha=1$, and the minimum possible profit $R(B_s^*,B_l^*,0)$ is the profit with $\alpha=0$.

\textbf{Tradeoff of a risk-seeking operator:} In Fig. \ref{fig:lambdaer}(a), we plot the tradeoff between the expected profit and the maximum possible profit for $\beta = 1$ under different $c_s$ and $\gamma$.
Since an EUT operator ($\gamma=1$) makes decision only by maximizing expected profit, we can see from Fig. \ref{fig:lambdaer}(a) that it can achieve the highest expected profit. On the other hand, a PT operator makes decision by taking into account both the expected profit and its risk preference. More specifically, although a risk-seeking PT operator (i.e., $\gamma<1$) achieves a lower expected profit comparing to an EUT operator, it can earn a higher maximum possible profit than an EUT operator. 
\weipp{Notice in Fig. 5(a), when the operator is very risk-seeking ($\gamma\rightarrow 0$), the expected profit and maximum possible profit both decrease. This is because the operator can achieve the maximum possible profit when the sensed spectrum is fully realized. However, when the sensing decision $B_s^*$ is larger than demand D (hence the maximum realized spectrum is larger than D), being more risk-seeking (which leads to a larger sensing decision $B_s^*$) will not lead to a larger maximum possible profit, but will only lead to a larger probability of achieving that maximum possible profit. This explains the ``bending" in the figure. }
%Since here we are evaluating the tradeoff between expected profit and maximum possible profit, we do not consider this region.
From Fig. \ref{fig:lambdaer}(a), we also observe that the tradeoff varies with sensing cost $c_s$. \weipp{We can see that under the three cases of $c_s$, an EUT operator has the same expected profit and maximum possible profit, because it has the same optimal sensing decision ($B_s^*=0$). However, a risk-seeking operator will have a smaller loss in expected profit but a larger gain in maximum possible profit than an EUT operator ($\gamma = 1$) when the sensing cost $c_s$ decreases.} In other words, a risk-seeking operator achieves a better tradeoff when the sensing cost decreases.

\textbf{Tradeoff of a risk-averse operator:} In Fig. \ref{fig:lambdaer}(b), we plot the tradeoff between optimal expected profit and minimum possible profit for $\gamma = 1$ under different $c_s$ and $\beta$. 
We can see from Fig. \ref{fig:lambdaer}(b) that a risk-averse operator ($\beta<1$) achieves a lower expected profit comparing to an EUT operator ($\beta=1$), but guarantees a higher minimum possible profit than an EUT operator. \weipp{For example, an extremely risk-averse operator ($\beta\rightarrow0$) has a similar expected profit and minimum possible profit under any sensing cost $c_s$, because its optimal sensing decision $B_s^*$ is always close to zero.} We can also observe that \weipp{a} risk-averse operator will have a smaller loss in expected profit but a larger gain in minimum possible profit than an EUT operator when the sensing cost $c_s$ increases. In other words, a risk-averse operator achieves a better tradeoff when the sensing cost increases.}

%==============================================================================
\section{Conclusions and Future Work} \label{sec:concl}
%==============================================================================

In this paper, we considered a spectrum investment problem with sensing uncertainty, where an operator decides its spectrum sensing and leasing decisions by considering both expected profit and its risk preference based on prospect theory. 
This is the first paper that studied the optimal decisions based on all three characteristics of prospect theory in the wireless communication literature, and compared and contrasted these decisions with those under the more widely used expected utility theory. 
\wei{Our results suggested that a risk-averse operator can achieve a large expected profit while guaranteeing a satisfactory level of minimum possible profit when the sensing cost is high. On the other hand, a risk-seeking operator can achieve both a large expected profit and maximum possible profit when the sensing cost is low.}

	This study demonstrated that a more realistic modeling based on prospect theory is important in understanding the operator's decisions in the wireless industry. On the other hand, this study is only a small first step, as we have only considered the operator's decision in a single time slot. Regarding the future work, we will consider a more general problem with decisions to be made in multiple time slots. In such a model, the operator's reference point may change over time, and the study of dynamic reference point is a recent active research field in prospect theory \cite{jin_bp08, yin2004markowitz}. \weipp{We will also conduct a survey to evaluate different people's risk preferences.}

%%%%%%%%%%%%%%%%%%%%%%%%%%%%%%%%%%%%%%%%%%%%%%%%%%%%%%%%%%%%%%%%%
\bibliographystyle{IEEEtran}
\bibliography{IEEEabrv,mybibfile}
%%%%%%%%%%%%%%%%%%%%%%%%%%%%%%%%%%%%%%%%%%%%%%%%%%%%%%%%%%%%%%%%%%

%==============================================================================
\appendix
%==============================================================================

\subsection{Proof of Theorem \ref{thm:pt_table}} \label{app:pt_table}
In this proof, we divide the feasible range of $B_s$ into three intervals, $\left[0,\frac{D}{\alpha_I}\right]$, $\left[\frac{D}{\alpha_I},\frac{D c_l}{c_s}\right]$, and $\left(\frac{D c_l}{c_s},\infty\right)$, and analyze the optimal decision $B_s^*$ in each interval.
\weipp{By this division, in the interval $\left[0,\frac{D}{\alpha_I}\right]$, $B_s\alpha_i \leq D$ for all $i \in \mathcal{I}$, so that we do not need to consider the possibility of sensing realization exceeds the demand. In the interval $(\frac{Dc_l}{c_s},\infty)$, we have $Dc_l < B_sc_s$, which means that the total cost of sensing $B_sc_s$ will be larger than the cost of leasing only $Dc_l$. Hence the optimal solution will not be in this range. The above reason leads to the division of three intervals.}

We will use the following notations in the proof:
\begin{align}
M_{I} \triangleq \left[\frac{\sum\limits_{i={\hat{\imath}}+1}^I \left(c_l\alpha_i\!-\!c_s\right)^{\beta}w\!\left(p_i\right)\beta}{\sum\limits_{i=1}^{\hat{\imath}} \lambda \gamma\left(c_s\!-\!c_l\alpha_i\right)^{\gamma}\left(\frac{1}{\alpha_{I}}\right)^{\gamma-\beta}w\!\left(p_i\right)}\right]^{\frac{1}{\gamma-\beta}},
\end{align}
\begin{align}
M_j &\!\triangleq\!\left[\frac{\sum\limits_{i={\hat{\imath}}+1}^j\!\!\!\!\! \left(\!c_l\alpha_i\!-\!c_s\!\right)^{\beta}\!w\!\left(\!p_i\!\right)\!\beta\!-\!\beta c_s\!\left(\!c_l\alpha_{j}\!-\!c_s\!\right)^{\beta-1}\!\!\!\!\!\sum\limits_{i=j+1}^I \!\!\!w\!\left(\!p_i\!\right)}{\sum\limits_{i=1}^{\hat{\imath}} \lambda \gamma\left(c_s-c_l\alpha_i\right)^{\gamma}\left(\frac{1}{\alpha_{j}}\right)^{\gamma-\beta}w\!\left(p_i\right)}\right]^{\frac{1}{\gamma-\beta}}\!\!\!\!\!\!\!\!,
\notag\\
&~~j={\hat{\imath}}\!+\!1,{\hat{\imath}}\!+\!2,...,I\!-\!1,
\end{align}
and
\begin{align}
H_j\! &\triangleq \!\!\left[\frac{\sum\limits_{i={\hat{\imath}}\!+\!1}^j \!\!\!\left(\!c_l\alpha_i\!\!-\!\!c_s\!\right)^{\beta}\!w\!\left(\!p_i\!\right)\!\beta\!\!-\!\!\beta c_s\!\left(\!c_l\alpha_{j\!+\!1}\!\!-\!\!c_s\!\right)^{\beta\!-\!1}\!\!\!\!\sum\limits_{i=j\!+\!1}^I \!\!w\!\left(p_i\right)}{\sum\limits_{i=1}^{\hat{\imath}} \lambda \gamma\left(c_s-c_l\alpha_i\right)^{\gamma}\left(\frac{1}{\alpha_{j+1}}\right)^{\gamma\!-\!\beta}\!w\!\left(\!p_i\!\right)}\right]^{\frac{1}{\gamma-\beta}}\!\!\!\!\!\!\!\!,
\notag\\&
~~ j={\hat{\imath}}\!+\!1,{\hat{\imath}}\!+\!2,...,I\!-\!1.
\end{align}

Next, we analyze the optimal decision $B_s^*$ within each interval.

\textbf{Case I:} $B_s\in\left[0,\frac{D}{\alpha_I}\right]$. We first compute the optimal decision $B_s^*$ in this interval using the first order condition. In this case, $B_s\alpha_i\leq D$  for all $i \in \mathcal{I}$, so the optimal leasing decision $B_l^*=D-B_s\alpha_i\geq0$. From (7), the revenue is 
\begin{equation}
\vspace{-1mm}
R\left(B_s,B_l^*,\alpha_i\right)=\left(\pi-c_l\right)D-B_sc_s+B_sc_l\alpha_i.
\vspace{-1mm}
\end{equation}

Since $\alpha_i$ follows a discrete distribution in $\left[0,1\right]$, we can plug (18) into (9), and get (19) by taking the proper expectation.
\begin{align}\label{equ:case1}
U&\!\left(\!B_s\!\right)\!=\!\!\!\!\sum_{i={\hat{\imath}}\!+\!1}^I\!\!\!\left(B_sc_l\alpha_i\!-\!B_sc_s\right)^{\beta}\!w\!\left(\!p_i\!\right)\!\!-\!\!\!\sum_{i=1}^{\hat{\imath}}\!\lambda\! \left(B_sc_s\!\!-\!\!B_sc_l\alpha_i\right)^{\gamma}\!w\!\left(\!p_i\!\right)\notag\\
=\!\!\!&\sum_{i={\hat{\imath}}\!+\!1}^I\!\!\! \left(c_l\alpha_i\!-\!c_s\right)^{\beta}w\!\left(\!p_i\!\right)\!B_s^{\beta}\!\!-\!\sum_{i=1}^{\hat{\imath}} \!\lambda\! \left(\!c_s-\!c_l\alpha_i\right)^{\gamma}\!w\!\left(\!p_i\!\right)\!B_s^{\gamma}.
\end{align}
We consider the first order derivative of \eqref{equ:case1} with respect to $B_s$:
\begin{align}
U'\!\!&\left(\!B_s\!\right)\!\!=\!\!\!\!\!\sum_{i={\hat{\imath}}\!+\!1}^I\!\!\!\! \left(c_l\alpha_i\!-\!c_s\right)^{\beta}\!\!w\!\left(\!p_i\!\right)\!\beta B_s^{\beta\!-\!1}\!\!-\!\!\!\sum_{i=1}^{\hat{\imath}} \!\!\lambda \gamma\!\left(c_s\!\!-\!c_l\alpha_i\right)^{\gamma}\!\!w\!\left(\!p_i\!\right)\!B_s^{\gamma\!-\!1}\notag\\
=&B_s^{\beta\!-\!1}\!\!\left[\sum_{i={\hat{\imath}}\!+\!1}^I\!\!\! \left(c_l\alpha_i\!-\!c_s\right)^{\beta}\!\!w\!\left(\!p_i\!\right)\!\beta\!-\!\!\!\sum_{i=1}^{\hat{\imath}}\! \lambda \gamma\!\left(c_s\!-\!c_l\alpha_i\right)^{\gamma}\!\!w\!\left(\!p_i\!\right)\!B_s^{\gamma\!-\!\beta}\right]\!.
\end{align}
We can obtain
\begin{align} \label{equ:hfunc}
&U'\left(B_s\right)>0 \Leftrightarrow \sum_{i=1}^{\hat{\imath}} -\lambda \gamma\left(-c_l\alpha_i+c_s\right)^{\gamma}w\!\left(p_i\right)B_s^{\gamma-\beta}\notag\\
&~~\quad\quad\quad\quad\quad\quad+\sum_{i={\hat{\imath}}\!+\!1}^I \left(c_l\alpha_i-c_s\right)^{\beta}w\!\left(p_i\right)\beta>0\notag\\
&\Leftrightarrow B_s<\left[\frac{\sum_{i={\hat{\imath}}+1}^I \left(c_l\alpha_i\!-\!c_s\right)^{\beta}w\!\left(p_i\right)\beta}{\sum_{i=1}^{\hat{\imath}} \lambda \gamma\left(c_s\!-\!c_l\alpha_i\right)^{\gamma}w\!\left(p_i\right)}\right]^{\frac{1}{\gamma-\beta}}\!\!=\frac{M_I}{\alpha_I},
\end{align}
where $M_I$ is defined in (15). Since the second order derivative $U''\!\left(B_s\right)\!<\!0$, we know that $U'\!\left(B_s\right)$ in (20) decreases in $B_s$. Hence, we only need to compare the right boundary $\frac{D}{\alpha_I}$ and the critical point $\frac{M_I}{\alpha_I}$. From \eqref{equ:hfunc}, we have
\begin{equation} 
     B_s^*=\left\{
    \begin{aligned}
    &\frac{M_I}{\alpha_I},  &   &\text{ if }\frac{M_I}{\alpha_I}<\frac{D}{\alpha_I},\\
    &\frac{D}{\alpha_I}, &   &\text{ if }\frac{M_I}{\alpha_I}\geq\frac{D}{\alpha_I}.\\
    \end{aligned}
    \right.\\
\end{equation}
Intuitively, when the leasing cost $c_l$ is high (and hence every $H_j$ in (17) is large), the utility increases when the operator senses more (and leases less). 

\textbf{Case II}: $B_s\in\left[\frac{D}{\alpha_I},\frac{Dc_l}{c_s}\right]$.
We compute the optimal decision $B_s^*$ in this interval by capturing the unimodal structure. In this case, profit $R\left(B_s,B_l^*,\alpha\right)$ can be represented as a piecewise function as follows.
\begin{equation} 
     R\left(B_s,B_l^*,\alpha\right)=\left\{
    \begin{aligned}
    &\pi D-B_sc_s,  &   &\text{ if }\alpha>\frac{D}{B_s},\\
    &B_sc_l\alpha-B_sc_s, &   &\text{ if }0<\alpha<\frac{D}{B_s}.\\
    \end{aligned}
    \right.\\
\end{equation}
We substitute (23) into (9), and the utility function becomes
\begin{align}\label{equ:fff}
U&\!\left(B_s\right)\!=\!\sum_{i=1}^{\hat{\imath}}\! -\lambda \!\left[B_s\!\left(c_s\!-\!c_l\alpha_i\right)\right]^{\gamma}\!w\!\left(p_i\right)\!\notag\\
+&\!\!\sum_{i={\hat{\imath}}+1}^j\! \!\left[B_s\!\left(c_l\alpha_i\!\!-\!\!c_s\right)\right]^{\beta}\!w\!\left(p_i\right)\!+\!\!\!\sum_{i=j+1}^I\!\!\left(Dc_l\!\!-\!\!B_sc_s\right)^{\beta}\!w\!\left(p_i\right)\!.
\end{align}

The value of $j$ in (24) varies when $B_s$ belongs to different sub-intervals. The utility function $U\left(B_s\right)$ in (24) is a continuous function in the whole interval, and is differentiable in each of the $I-{\hat{\imath}}$ sub-intervals, $\frac{D}{\alpha_{j+1}}\leq B_s\leq\frac{D}{\alpha_{j}}$, $j={\hat{\imath}}+1$, ..., $I-1$, and $\frac{D}{\alpha_{{\hat{\imath}}+1}}\leq B_s\leq\frac{Dc_l}{c_s}$. Although the utility function is not globally differentiable, we can evaluate the derivative of each sub-interval, and find the optimal point for each of the $I-{\hat{\imath}}$ sub-intervals. Then we can find the optimal solution of the whole interval by comparing the optimal points in the $I-{\hat{\imath}}$ sub-intervals.

We conclude the optimal sensing decision $B_{s,j}^*$ in the interval $\frac{D}{\alpha_{j+1}}\leq B_s\leq \frac{D}{\alpha_{j}}$, $j={\hat{\imath}}+1$, ..., $I-1$ in Proposition 1.

\begin{pps} \label{thm:onemax}
The maximum utility value $U\left(B_{s,j}^*\right)$ in each sub-interval among $\frac{D}{\alpha_{j+1}}\leq B_s\leq \frac{D}{\alpha_{j}}$, $j={\hat{\imath}}+1$, ..., $I-1$ is achieved at
\begin{equation} 
     B_{s,j}^*=\left\{
    \begin{aligned}
    &\frac{D}{\alpha_{j}}, &   &\text{ if }D\leq M_{j},\\
    &g_{j}^{-1}\left(0\right), &   &\text{ if }M_j< D<H_{j},\\
        &\frac{D}{\alpha_{j+1}},  &   &\text{ if }H_j\leq D<M_{j+1}.\\
    \end{aligned}
    \right.\\
\end{equation}
\end{pps}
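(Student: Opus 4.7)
The plan is to establish Proposition 1 by showing that $U(B_s)$, when restricted to the sub-interval $\left[D/\alpha_{j+1},\,D/\alpha_j\right]$, is unimodal, so that the location of its maximizer is determined by the sign of $U'$ at the two endpoints.

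First I would differentiate the expression (24) for $U(B_s)$ within this sub-interval. The three sums differentiate separately: the loss term (over $i \le \hat{\imath}$) picks up $B_s^{\gamma-1}$ after differentiating $B_s^\gamma$; the partial-gain term (over $\hat{\imath}+1 \le i \le j$) picks up $B_s^{\beta-1}$; and the fully-sensed term (over $i \ge j+1$) yields $-\beta c_s (Dc_l - B_s c_s)^{\beta-1}$ by the chain rule. Collecting these gives $U'(B_s) = g_j(B_s)$ with $g_j$ as defined in (27) of Appendix A. Since $\beta-1 < 0$, the $B_s^{\beta-1}$ factor in the positive term is decreasing in $B_s$; since $\gamma-1 < 0$, the subtracted $B_s^{\gamma-1}$ contribution is also decreasing as a whole; and since $Dc_l - B_s c_s$ is decreasing in $B_s$ while the exponent $\beta-1$ is negative, $(Dc_l - B_s c_s)^{\beta-1}$ is increasing, so the third (negative) contribution is decreasing too. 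Hence $g_j$ is strictly decreasing on the sub-interval, which is the structural property I will exploit.

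Strict monotonicity of $g_j$ immediately implies that $U$ is unimodal on the sub-interval: either $g_j$ vanishes at a unique interior point, giving the maximizer $B_{s,j}^\ast = g_j^{-1}(0)$, or $g_j$ keeps a constant sign and the maximum is attained at one of the two endpoints. I would then evaluate $g_j$ at both endpoints. Substituting $B_s = D/\alpha_j$ and using the identity $Dc_l - (D/\alpha_j)c_s = (D/\alpha_j)(c_l\alpha_j - c_s)$, the condition $g_j(D/\alpha_j) \ge 0$ (equivalently, $U$ non-decreasing at the right endpoint) should reduce precisely to $D \le M_j$ via the definition (16); the analogous substitution $B_s = D/\alpha_{j+1}$ reduces $g_j(D/\alpha_{j+1}) \le 0$ to $D \ge H_j$ via (17).

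Combining the three cases, Proposition 1 follows: when $D \le M_j$, $g_j$ is non-negative on the whole sub-interval, so $U$ is non-decreasing and $B_{s,j}^\ast = D/\alpha_j$; when $D \ge H_j$, $g_j$ is non-positive throughout and $B_{s,j}^\ast = D/\alpha_{j+1}$; and when $M_j < D < H_j$, the endpoint signs force a unique interior zero of $g_j$, giving $B_{s,j}^\ast = g_j^{-1}(0)$. The main obstacle will be the careful algebraic bookkeeping to confirm that the endpoint sign conditions on $g_j$ coincide \emph{exactly} with the thresholds $M_j$ and $H_j$: one must track the factor $B_s^{\gamma - \beta}$ that appears when the $B_s^{\beta-1}$ and $B_s^{\gamma-1}$ terms are combined, and then isolate the resulting condition on $D$ alone, verifying that it matches (16)--(17) term by term.
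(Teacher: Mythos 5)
Your overall strategy is the same as the paper's (Appendix A, Proposition 1): reduce the problem to the sign of a strictly decreasing function $g_j$ on the sub-interval, then read off the three cases from the endpoint conditions $g_j(D/\alpha_j)\ge 0 \Leftrightarrow D\le M_j$ and $g_j(D/\alpha_{j+1})\le 0 \Leftrightarrow D\ge H_j$. That part of the plan is sound and matches the paper. However, your argument for the crucial monotonicity claim contains a genuine error. You assert that ``since $\gamma-1<0$, the subtracted $B_s^{\gamma-1}$ contribution is also decreasing as a whole.'' It is not: the loss term in $U'(B_s)$ is $-\sum_{i\le\hat\imath}\lambda\gamma(c_s-c_l\alpha_i)^{\gamma}w(p_i)B_s^{\gamma-1}$, i.e.\ a \emph{negative} constant times the decreasing positive function $B_s^{\gamma-1}$, which is therefore \emph{increasing} in $B_s$. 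So a term-by-term inspection of the raw derivative does not establish that $U'$ is decreasing, and indeed $U''$ has no definite sign here; your monotonicity step, as written, fails.

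The fix is exactly the factorization you relegate to a closing remark, and it is not mere bookkeeping --- it is the load-bearing step. One must pull out the strictly positive factor $\sum_{i=\hat\imath+1}^{j}\beta(c_l\alpha_i-c_s)^{\beta}w(p_i)B_s^{\beta-1}$, so that $U'(B_s)$ equals this positive quantity times $g_j(B_s)$ (note: $U'$ is a positive multiple of $g_j$, not equal to it, though only the sign matters). After this normalization the loss term carries $B_s^{\gamma-\beta}$, and it is Assumption 1 ($\beta<\gamma$, hence $\gamma-\beta>0$) --- not $\gamma-1<0$ --- that makes $-\lambda(c_s-c_l\alpha_i)^{\gamma}B_s^{\gamma-\beta}$ decreasing; the constant $1$ from the middle sum and the decreasing third term of $g_j$ then give strict monotonicity of $g_j$ and hence a unique sign change of $U'$. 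With that correction, the unimodality and the endpoint case analysis go through as you describe and coincide with the paper's proof.
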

\begin{proof}
We prove Proposition 1 by capturing the unimodal structure of (24) in each sub-interval $\frac{D}{\alpha_{j+1}}\leq B_s\leq \frac{D}{\alpha_{j}}$. For a unimodal problem, the optimal point is either at the unique local maximum point or the boundaries.
For $\frac{D}{\alpha_{j+1}}\leq B_s\leq \frac{D}{\alpha_{j}}$, we consider the first order derivative of (24) with respect to $B_s$:
\begin{align}
&U'\!\left(\!B_s\right)\!\!=\!\!\sum\limits_{i=1}^{\hat{\imath}}\!\!-\!\lambda \left(c_s\!-\!c_l\alpha_i\right)^{\gamma}w\!\left(p_i\right)\!\gamma B_s^{\gamma-1}\!\!\notag\\
&~~+\!\!\!\sum\limits_{i={\hat{\imath}}\!+\!1}^j\!\!\beta\! \left(c_l\alpha_i\!-\!c_s\right)^{\beta}\!\!w\!\left(\!p_i\!\right)\!B_s^{\beta\!-\!1}\!\!-\!c_s\!\!\!\sum\limits_{i=j\!+\!1}^I\!\!\beta\!\left(Dc_l\!-\!B_sc_s\right)^{\beta\!-\!1}\!\!w\!\left(\!p_i\!\right)\!\notag\\
&=\sum\limits_{i={\hat{\imath}}+1}^j\!\beta\left(c_l\alpha_i\!-\!c_s\right)^{\beta}w\!\left(p_i\right)\!B_s^{\beta-1}\!\!\Bigg[\frac{\sum\limits_{i=1}^{\hat{\imath}}-\lambda\left(c_s\!-\!c_l\alpha_i\right)^\gamma B_s^{\gamma\!-\!\beta}}{\sum\limits_{i={\hat{\imath}}+1}^j \left(c_l\alpha_i-c_s\right)^{\beta}w\!\left(p_i\right)\!\beta}\!\notag\\
&~~~~+\!1\!-\!\frac{c_s\!\!\sum\limits_{i=j+1}^I \!\!\left(\frac{Dc_l}{B_s}\!-\!c_s\right)^{\beta-1}\!w\!\left(p_i\right)\!}{\sum\limits_{i={\hat{\imath}}+1}^j \!\!\left(c_l\alpha_i\!-\!c_s\right)^{\beta}w\!\left(p_i\right)\!}\Bigg],
\end{align}
where we define
\begin{align}
g_j&\left(B_s\right) \triangleq \frac{\sum\limits_{i=1}^{\hat{\imath}}\!-\lambda\left(c_s\!-\!c_l\alpha_i\right)^\gamma B_s^{\gamma-\beta}}{\sum\limits_{i={\hat{\imath}}+1}^j\! \left(c_l\alpha_i\!-\!c_s\right)^{\beta}w\!\left(p_i\right)\beta}\!+\!1\!\notag\\
&-\!\frac{c_s\left(\frac{Dc_l}{B_s}\!-\!c_s\right)^{\beta-1}\!\!\!\sum\limits_{i=j+1}^I\! w\!\left(p_i\right)}{\sum\limits_{i={\hat{\imath}}+1}^j \!\left(c_l\alpha_i\!-\!c_s\right)^{\beta}w\!\left(p_i\right)}, ~ j={\hat{\imath}}\!+\!1,{\hat{\imath}}\!+\!2,...,I\!-\!1.
\end{align}

With (27), we can rewrite $U'\left(B_s\right)$ as 
\begin{equation}
    U'\left(B_s\right)=\sum_{i={\hat{\imath}}+1}^j \beta\left(c_l\alpha_i-c_s\right)^{\beta}w\!\left(p_i\right)B_s^{\beta-1}g_j\left(B_s\right),
\end{equation}
where $\sum_{i={\hat{\imath}}+1}^j \beta\left(c_l\alpha_i-c_s\right)^{\beta}w\!\left(p_i\right)B_s^{\beta-1}>0$. The function $g_j\left(B_s\right)$ in (27) is a strictly decreasing function of $B_s$, which means the first order derivative $U'\left(B_s\right)$ in (28) will only be zero at most once, thus there is at most one local maximum point\footnote{If this point is local minimum, then there is no local maximum point, and the maximum point in this sub-interval is at the boundaries.}. 

We then consider the two boundary points: $B_s=\frac{D}{\alpha_{j+1}}$ and $B_s=\frac{D}{\alpha_j}$.

$1)$ When $B_s=\frac{D}{\alpha_{j+1}}$, we have:
\begin{align}
 g_j\!\left(\!\frac{D}{\alpha_{j+1}}\!\right)\!=& \frac{\sum_{i=1}^{\hat{\imath}}\!-\!\lambda\left(-c_l\alpha_i\!+\!c_s\right)^\gamma \!\left(\frac{D}{\alpha_{j+1}}\right)^{\gamma-\beta}\!\!\gamma w\!\left(p_i\right)}{\sum_{i={\hat{\imath}}+1}^j \left(c_l\alpha_i\!-\!c_s\right)^{\beta}w\!\left(p_i\right)\beta}\!+\!1\!\notag\\ &-\!\frac{ c_s\sum_{i=j+1}^I \!\left(c_l\alpha_{j+1}\!-\!c_s\right)^{\beta-1}w\!\left(p_i\right)}{\sum_{i={\hat{\imath}}+1}^j \left(c_l\alpha_i\!-\!c_s\right)^{\beta}w\!\left(p_i\right)}.
\end{align}
We can obtain									
						\begin{align}
						 g_j\left(\frac{D}{\alpha_{j+1}}\right)>0 \Leftrightarrow D<H_j.
						\end{align}
			
		$2)$ When $B_s=\frac{D}{\alpha_{j}}$, we have:
						\begin{align}
 g_j\left(\frac{D}{\alpha_{j}}\right) =& \frac{\sum_{i=1}^{\hat{\imath}}-\lambda\left(-c_l\alpha_i+c_s\right)^\gamma \left(\frac{D}{\alpha_{j}}\right)^{\gamma-\beta}\gamma w\!\left(p_i\right)}{\sum_{i={\hat{\imath}}+1}^j \left(c_l\alpha_i-c_s\right)^{\beta}w\!\left(p_i\right)\beta}+1\notag\\
 &-\frac{ c_s\sum_{i=j+1}^I \left(c_l\alpha_{j}\!-\!c_s\right)^{\beta-1}w\!\left(p_i\right)}{\sum_{i={\hat{\imath}}+1}^j \left(c_l\alpha_i-c_s\right)^{\beta}w\!\left(p_i\right)}.
 \vspace{-1mm}
\end{align}
			We can obtain									
						\begin{align}
						 g_j\left(\frac{D}{\alpha_{j}}\right)>0 \Leftrightarrow D<M_j.
						\end{align}		

We can see that $g_j\left(B_s\right)$ and $U'\left(B_s\right)$ in (28) have the same sign within the interval $\left[\frac{D}{\alpha_{j+1}},\frac{D}{\alpha_j}\right]$. From (30), we have that $U'\left(\frac{D}{\alpha_{j+1}}\right)\geq0$ when $D\geq H_j$ and $U'\left(\frac{D}{\alpha_{j+1}}\right)<0$ when $D<H_j$. From (32), we have that $U'\left(\frac{D}{\alpha_{j}}\right)\geq0$ when $D\geq M_j$ and $U'\left(\frac{D}{\alpha_{j}}\right)<0$ when $D<M_j$. Thus $U'\left(B_s\right)$ can be either negative within the entire interval $\left[\frac{D}{\alpha_{j+1}},\frac{D}{\alpha_j}\right]$, or positive within the entire interval, or first positive and then negative within that interval based on the value of demand parameters $H_j$ and $M_j$.

To sum up, the optimal solution $B_{s,j}^*$ in each sub-interval $B_s\in \left[\frac{D}{\alpha_{j+1}},\frac{D}{\alpha_j}\right]$, $j={\hat{\imath}}+1$, ..., $I-1$ depends on the value of demand parameters $H_j$ and $M_j$ as in (25).
\end{proof}

Then we are going to study the cases at the boundaries of the interval $\left[\frac{D}{\alpha_I},\frac{Dc_l}{c_s}\right]$. The case of left boundary ($B_s=\frac{D}{\alpha_I}$) is included in Proposition 1. When $B_s>\frac{D}{\alpha_{{\hat{\imath}}+1}}$, we can show the utility $U\left(B_s\right)$ in (24) is decreasing in $B_s$, hence the optimal $B_s^*\leq\frac{D}{\alpha_{{\hat{\imath}}+1}}$.

We then study the relation between the adjacent sub-intervals. Since (i) $U\left(B_s\right)$ is continuous, (ii) $g_j\left(B_s\right)>g_{j-1}\left(B_s\right)$ for all $j$, and (iii) $g_j\left(B_s\right)$ is decreasing in $B_s$ for all $j$, we know in $\left[\frac{D}{\alpha_I},\frac{Dc_l}{c_s}\right]$,
\begin{equation} 
     B_s^*=\left\{
    \begin{aligned}
    &\frac{D}{\alpha_j},  &   &\text{ if }g_j\left(\frac{D}{\alpha_j}\right)\geq0 \text{ and } g_{j-1}\left(\frac{D}{\alpha_j}\right)\leq0,\\
    &g_j^{-1}\left(0\right), &   &\text{ if }g_j\left(\frac{D}{\alpha_j}\right)<0\text{ and } g_j\left(\frac{D}{\alpha_{j+1}}\right)>0.\\
    \end{aligned}
    \right.\\
\end{equation}

Based on (30), (32) and (33), we can have the following summary.
\begin{equation} 
     B_s^*=\left\{
    \begin{aligned}
    &\frac{D}{\alpha_{j}}, &   &\text{ if } H_j\leq D\leq M_{j+1},\\
    &g_j^{-1}\left(0\right),  &   &\text{ if } M_j<D<H_j,\\    
    \end{aligned}
    \right.\\
\end{equation}
where $j=\hat{\imath},...,I-1$.
						
\textbf{Case III:} $B_s\in\left(\frac{Dc_l}{c_s},\infty\right)$.
In this case, $Dc_l-B_sc_s<0$. From (7) and (11), we know 
\begin{equation}
\vspace{-1mm}
  R\left(B_s,B_l^*, \alpha\right)-R_p= Dc_l-B_sc_s-\max\{D-B_s\alpha,0\}<0,
\end{equation}
which means the total cost of sensing $B_sc_s$ will be larger than the cost of only using spectrum leasing $Dc_l$.  However, the amount of revenue $D\pi$ from sensing or leasing is the same, since the total demand is limited. Hence it is impossible to choose the optimal $B_s^\ast$ in this range to maximize the utility. 

By summarizing the analysis of the above three cases, we obtain Table \ref{table:pt}.

%%%%%%%%%%%%%%%%%%%%%%%%%%%%%%%%%%%%%%%%%%%%%%%%
\subsection{Proof of Corollary \ref{lem:lambda}} \label{app:lambda}

From (26), we obtain 
						{\setlength\abovedisplayskip{3pt plus 5pt minus 7pt} 
\setlength\belowdisplayskip{3pt plus 5pt minus 7pt} 		
\begin{align}
h_{\lambda}\left(\lambda\right) \triangleq \frac{\partial U'\left(B_s\right)}{\partial \lambda}=\sum\limits_{i=1}^{\hat{\imath}}-\gamma w\!\left(p_i\right)\left(c_s-c_l\alpha_i\right)^\gamma B_s^{\gamma-1}<0,
\end{align} 
 }
 and
 \begin{align}
h_\gamma\left(\gamma\right) \triangleq& \frac{\partial U'\left(B_s\right)}{\partial \gamma}=\sum\limits_{i=1}^{\hat{\imath}}-\lambda w\!\left(p_i\right)\left(c_s-c_l\alpha_i\right)^\gamma B_s^{\gamma-1}\notag\\ 
&+\sum\limits_{i=1}^{\hat{\imath}}-\lambda w\!\left(p_i\right)\gamma\ln{\left(c_s-c_l\alpha_i\right)}\left(c_s-c_l\alpha_i\right)^\gamma\notag\\
&-B_s^{\gamma-1}\sum\limits_{i=1}^{\hat{\imath}}\lambda w\!\left(p_i\right)\gamma\ln{B_s}\left(c_s\!-\!c_l\alpha_i\right)^\gamma B_s^{\gamma-1}<0.
\end{align} 

If $\lambda_1>\lambda_2$, then $h_{\lambda}\left(\lambda_1\right)<h_{\lambda}\left(\lambda_2\right)$ for every $B_s$, which means $U'\left(B_s\right)$ with $\lambda = \lambda_1$ is less than $U'\left(B_s\right)$ with $\lambda = \lambda_2$ for every $B_s$, and $U'\left(B_s\right)$ with $\lambda = \lambda_1$ will become zero with a smaller $B_s$. Hence $B_s^*$ decreases in $\lambda$.

If $\gamma_1>\gamma_2$, then $h_{\gamma}\left(\gamma_1\right)<h_{\gamma}\left(\gamma_2\right)$ for every $B_s$, which means $U'\left(B_s\right)$ with $\gamma = \gamma_1$ is less than $U'\left(B_s\right)$ with $\gamma = \gamma_2$ for every $B_s$, and $U'\left(B_s\right)$ with $\gamma = \gamma_1$ will be zero with a smaller $B_s$. Hence $B_s^*$ decreases in $\gamma$ when $D>M_I$.

%%%%%%%%%%%%%%%%%%%%%%%%%%%%%%%%%%%%%%%%%%%%%%%%
\subsection{Proof of Corollary \ref{lem:beta}} \label{app:beta}

From (21), we obtain
 \begin{align}
h_\beta&\left(\beta\right) \triangleq \frac{\partial U'\!\left(B_s\right)}{\partial \beta}=\!\!\sum\limits_{i={\hat{\imath}}+1}^I \!\!w\!\left(p_i\right)\left(c_l\alpha_i\!-\!c_s\right)^\beta B_s^{\beta-1}\!\!\notag\\
&-\!\!\sum\limits_{i={\hat{\imath}}+1}^I\lambda w\!\left(p_i\right)\beta\ln{\left(c_l\alpha_i\!-\!c_s\right)}\left(c_l\alpha_i\!-\!c_s\right)^\beta\notag\\ 
&-B_s^{\beta-1}\!\!\!\sum\limits_{i={\hat{\imath}}+1}^I\!\!\!\lambda w\!\left(p_i\right)\beta\ln{B_s}\!\left(c_l\alpha_i\!-\!c_s\right)^\beta\!B_s^{\beta-1}\!<\!0.
\end{align}

If $\beta_1>\beta_2$, then $h_\beta\left(\beta_1\right)<h_\beta\left(\beta_2\right)$ for every $B_s$, which means $U'\left(B_s\right)$ with $\beta = \beta_1$ is less than $U'\left(B_s\right)$ with $\beta = \beta_2$ for every $B_s$, and $U'\left(B_s\right)$ with $\beta = \beta_1$ will be zero with a smaller $B_s$. Hence $B_s^*$ decreases in $\beta$.

%%%%%%%%%%%%%%%%%%%%%%%%%%%%%%%%%%%%%%%%%%%%%%%%
\subsection{Proof of Corollary \ref{lem:cs}} \label{app:cs}

From (26), we obtain
 \begin{align}
h_{c_s}\left(c_s\right) \triangleq& \frac{\partial U'\left(B_s\right)}{\partial c_s}=\sum\limits_{i=1}^{\hat{\imath}}-\lambda{\gamma}^2 w\!\left(p_i\right)\left(c_s-c_l\alpha_i\right)^{\gamma-1} B_s^{\gamma-1}\notag\\
&+c_s\!\sum\limits_{i=j+1}^I B_s\beta \left(\beta-1\right)w\!\left(p_i\right)\left(Dc_l\!-\!B_sc_s\right)^{\beta-2}\notag\\ &-\!\!\!\sum\limits_{i=j+1}^{I}\!\!\beta\left(Dc_l\!-\!B_sc_s\right)^{\beta\!-\!1} w\!\left(p_i\right)\!\!\notag\\
&-\!\!\!\sum\limits_{i=\hat{\imath}+1}^{j}\!\!{\beta}^2 w\!\left(p_i\right)\left(c_l\alpha_i-\!\!c_s\right)^{\beta-1}B_s^{\beta\!-\!1}\!<\!0.
\end{align}

If $c_s^1>c_s^2$, then $h_{c_s}\left(c_s^1\right)<h_{c_s}\left(c_s^2\right)$ for every $B_s$, which means $U'\left(B_s\right)$ with $c_s = c_s^1$ is less than $U'\left(B_s\right)$ with $c_s = c_s^2$ for every $B_s$, and $U'\left(B_s\right)$ with $c_s = c_s^1$ will be zero with a smaller $B_s$. Hence $B_s^*$ decreases in $c_s$.

%%%%%%%%%%%%%%%%%%%%%%%%%%%%%%%%%%%%%%%%%%%%%%%%%%%%%%%%%%%%%%%%%%%%

%%%%%%%%%%%%%%%%%%%%%%%%%%%%%%%%%%%%%%%%%%%%%%%%%%%%%%%%%%%%%%%%%%%%
\subsection{The EUT Benchmark in Section IV}

For comparison, we also consider the operator's optimal sensing and leasing decisions under the EUT model. As mentioned in Section III, EUT model is a special case of the PT model with $\beta = 1$, $\gamma=1$, $\lambda = 1$, and $\mu = 1$. Under the EUT model, we can obtain the solution of problem \eqref{equ:problem_sense} analytically.

%%%%%%%%%% ttt 2%%%%%%%%%%%%%%%%%%%%%%
\begin{table*}[t]  
\centering 
\caption{Optimal Sensing Decision and Leasing Decision under EUT}
\begin{tabular}{|c|c|c|}
\hline
\textbf{Condition}&\textbf{Optimal Sensing Decision} $B_s^*$&\textbf{Optimal Leasing Decision} $B_l^*$\\ \hline
$\frac{c_l}{c_s}\leq \left(\sum\limits_{i=1}^I p_i\alpha_i\right)^{-1}$&$B_s^*=0$&$B_l^*=D$\\ \hline
$\left(\sum\limits_{i=1}^{j+1} p_i\alpha_i\right)^{-1}\!<\!\frac{c_l}{c_s}\!<\!\left(\sum\limits_{i=1}^j p_i\alpha_i\right)^{-1}$ for $j = 1,...,I\!-\!1$&$B_s^*=\min\{\frac{D}{\alpha_{j+1}},\frac{Dc_l}{c_s}\}$&$B_l^*\!=\!\max\{0,D\!-\!\alpha \min\{\frac{D}{\alpha_{j+1}},\frac{Dc_l}{c_s}\}\}$\\ \hline
$\frac{c_l}{c_s}\geq\left(\alpha_1p_1\right)^{-1}$&$B_s^*=\frac{D}{\alpha_1}$&$B_l^*=D-\alpha \frac{D}{\alpha_1}$\\\hline
\end{tabular} \label{table:eut}
\end{table*}

\begin{thm} \label{thm:eut_table}
The optimal sensing decision $B_s^*$ for problem (10) and the optimal leasing decision $B_l^\ast$ for problem (5) under EUT are summarized in Table \ref{table:eut}.
\end{thm}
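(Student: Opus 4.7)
The plan is to exploit the dramatic simplification that EUT affords. Setting $\lambda=\beta=\gamma=\mu=1$ in (3) and (4), the value function becomes the identity and the probability weights are undistorted, so (9) collapses to the ordinary expectation
\begin{equation*}
U(B_s) = \pi D - B_s c_s - c_l\sum_{i=1}^I p_i\max\{D-B_s\alpha_i,\,0\} - R_p,
\end{equation*}
and the additive constant $R_p$ has no bearing on the optimizer. This expression is piecewise linear in $B_s$, with kinks precisely at $B_s=D/\alpha_i$, $i\in\mathcal{I}$. I would partition $[0,\infty)$ accordingly and analyze each piece.

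First I would characterize the slope on each sub-interval. On $[D/\alpha_{j+1},\,D/\alpha_j)$, exactly the terms with index $i\leq j$ remain active in the $\max$, yielding
\begin{equation*}
U'(B_s) = -c_s + c_l\sum_{i=1}^{j} p_i\alpha_i,
\end{equation*}
with the convention $D/\alpha_{I+1}=0$ and $D/\alpha_0=+\infty$; in particular, on $[0,D/\alpha_I)$ the slope is $-c_s+c_l\sum_{i=1}^I p_i\alpha_i$, and for $B_s\geq D/\alpha_1$ the slope reduces to $-c_s<0$. Since $\sum_{i=1}^j p_i\alpha_i$ is strictly decreasing in $j$ (as $j$ shrinks), the slope is non-increasing in $B_s$, so $U$ is piecewise linear concave. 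The global maximum is therefore attained either at $B_s=0$ or at the unique kink $B_s=D/\alpha_{j+1}$ where the slope first crosses from positive to non-positive.

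Next I would enumerate the threshold conditions. The optimizer is $B_s^\ast=0$ exactly when the first slope is already non-positive, i.e.\ $c_l/c_s\leq 1/\sum_{i=1}^I p_i\alpha_i$. Otherwise the sign flip occurs at some intermediate kink: for $j=1,\ldots,I-1$, if $1/\sum_{i=1}^{j+1}p_i\alpha_i < c_l/c_s < 1/\sum_{i=1}^{j}p_i\alpha_i$, then $U'>0$ on $[0,D/\alpha_{j+1})$ and $U'<0$ on $[D/\alpha_{j+1},D/\alpha_j)$, forcing $B_s^\ast=D/\alpha_{j+1}$. Finally, if $c_l/c_s\geq 1/(p_1\alpha_1)$, then $U'>0$ all the way up to $D/\alpha_1$, after which the slope drops to $-c_s<0$, giving $B_s^\ast=D/\alpha_1$. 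The matching $B_l^\ast$ entries are then immediate from the Stage-II rule \eqref{wei:6}.

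The only cosmetic mismatch with Table \ref{table:eut} is the $\min\{D/\alpha_{j+1},Dc_l/c_s\}$ formula in the middle row. This I would dispose of by observing $\sum_{i=1}^{j+1}p_i\alpha_i\leq\alpha_{j+1}$, which under the hypothesis $c_l/c_s>1/\sum_{i=1}^{j+1}p_i\alpha_i$ forces $Dc_l/c_s > D/\alpha_{j+1}$, so the minimum is always attained at $D/\alpha_{j+1}$. There is no genuine obstacle in this proof; the main hazard is bookkeeping, specifically keeping the indexing of sub-intervals aligned with the active set of $\max$ terms so that the threshold inequalities in the table are reproduced exactly.
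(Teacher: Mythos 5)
Your proof is correct and follows essentially the same route as the paper's: partition $[0,\infty)$ at the kinks $D/\alpha_i$, compute the piecewise-linear slopes $-c_s+c_l\sum_{i=1}^{j}p_i\alpha_i$, use their monotonicity (concavity of $U$) to locate the optimum at the sign-change kink, and translate the sign conditions into thresholds on $c_l/c_s$. The one small divergence is your treatment of the middle row: the paper retains $\min\{D/\alpha_{j+1},Dc_l/c_s\}$ by invoking the cap $B_s^*\leq Dc_l/c_s$ from Case III of Appendix A, whereas you correctly observe that $\sum_{i=1}^{j+1}p_i\alpha_i\leq\alpha_{j+1}$ makes the $\min$ redundant under the stated hypothesis --- a valid simplification, not a different proof.
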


\begin{proof}
%%%%%%%%%%%%%%%%%%%%%%%%%%%%%%%%%%%%%%%%%%%%%%%
In the proof, we divide the feasible range of $B_s$ into two intervals, $\left[0, \frac{D}{\alpha_1}\right]$, and $\left[\frac{D}{\alpha_1},\infty\right)$.

$1)$ Case I: $B_s\leq \frac{D}{\alpha_1}$. We first compute the optimal decision $B_s^*$ in this interval. In this case, $B_s\alpha$ is not always larger than $D$, thus by (9), we can write the expectation of revenue $R\left(B_s,B_l^*,\alpha\right)$ with respect to $\alpha$. In each sub-interval $0\leq B_s\leq \frac{D}{\alpha_I}$\footnote{In the case $0\leq B_s\leq \frac{D}{\alpha_I}$, the utility function $U\left(B_s\right)$ is equivalent to $U\left(B_s\right)$ in (40) with $j=I$.} and $\frac{D}{\alpha_{j+1}}\leq B_s\leq \alpha_j$, $j=1,...,I-1$, we can write 
										%{\setlength\abovedisplayskip{3pt plus 5pt minus 7pt} 
%\setlength\belowdisplayskip{3pt plus 5pt minus 7pt} 		
        \begin{align}
U\left(B_s\right)=&\mathbb{E}_{\alpha}\left[R\left(B_s,B_l^*,\alpha\right)\right]\notag\\
=&\!\left(\pi D\!-\!B_sc_s\right)\!\!\!\!\sum\limits_{i = j+1}^I\!\!\!p_i\!+\!\!\sum\limits_{i=1}^j\!\left[\left(\pi\!-\!c_l\right)\!D\!\!-\!\!B_s(c_s\!-\!c_l\alpha_i)\right]\!p_i\notag\\
        =&\pi D-\sum\limits_{i=1}^j p_iDc_l+\left[\left(\sum\limits_{i=1}^j\alpha_ip_i\right)c_l-c_s\right]B_s.
				\end{align}

From (40), we know that $U\left(B_s\right)$ is continuous and piecewise linear in $B_s$. Since $\sum\limits_{i=1}^j\alpha_ip_ic_l$ in (41) is increasing in $j$, we know that there is at most one local maximum point. Hence, the optimal $B_s^*$ is either at the local maximum point or the boundaries, depending on the value of $\frac{c_l}{c_s}$,
	
        \begin{align}
			B_s^* &\!=\! \left\{
				\begin{aligned}
        \!0, ~&\text{if } \frac{c_l}{c_s}\leq\left(\sum\limits_{i=1}^{I}\alpha_ip_i\right)^{-1}.\\        \!\frac{D}{\alpha_j},~&\text{if} \left(\!\sum\limits_{i=1}^{j\!+\!1}\alpha_ip_i\!\!\right)^{\!-1}\!\!\!\!\!\!\!\!<\!\frac{c_l}{c_s}\!<\!\!\left(\!\sum\limits_{i=1}^{j}\alpha_ip_i\!\!\right)^{\!-1}\!\!\!\!\!\!\!\!,~ j \!=\! 1,2,...,I\!\!-\!\!1,\\
        				        \!\frac{D}{\alpha_1},~&\text{if } \frac{c_l}{c_s}\geq\left(\alpha_1p_1\right)^{-1}.\\
        \end{aligned}
				\right.
        \end{align}
        
        \begin{comment}
                \begin{align}
			U\left(B_s^*\right) &= \left\{
				\begin{aligned}
				       \pi D-p_1Dc_l,~&\text{if } \frac{c_l}{c_s}>\left(\alpha_1p_1\right)^{-1},\\
        \pi D+\left[\left(\sum\limits_{i=1}^j\alpha_ip_i\right)c_l\!-\!c_s\right]\frac{D}{\alpha_j},~&\text{if } \left(\sum\limits_{i=1}^{j+1}\alpha_ip_i\right)^{-1}\leq\frac{c_l}{c_s}\leq\left(\sum\limits_{i=1}^{j}\alpha_ip_i\right)^{-1},~ j = 1,2,...,I\!-\!1,\\
        \pi D+\left[\left(\sum\limits_{i=1}^j\alpha_ip_i\right)c_l\!-\!c_s\right]\frac{D}{\alpha_{j+1}}, ~&\text{if } \frac{c_l}{c_s}<\left(\sum\limits_{i=1}^{I}\alpha_ip_i\right)^{-1}.\\
        \end{aligned}
				\right.
        \end{align}
        \end{comment}
From the analysis of Case III in Appendix A, we know that the optimal $B_s^*\in\left[0,\frac{Dc_l}{c_s}\right]$. By comparing the value of $\frac{Dc_l}{c_s}$ with the optimal $B_s^*$ in (40), we obtain the results in the first two rows of Table \ref{table:eut}.

$3)$ Case II: $B_s\in\left[\frac{D}{\alpha_1},\infty\right)$.\\
In this case, $B_s\alpha\geq D$. From (9), we know 
\begin{equation}
    U\left(B_s\right)=\mathbb{E}\left[R\left(B_s,B_l^*,\alpha\right)\right]=\pi D-B_sc_s.
\end{equation}
Since $R\left(B_s\right)$ is decreasing in $B_s$ in this case, the optimal sensing decision $B_s^*=\frac{D}{\alpha_1}$, and the corresponding utility $U\left(B_s^*\right)=\pi D-\frac{D}{\alpha_1}c_s$

Since $U\left(B_s\right)$ is continuous, we combine the optimal utilities from Case I and Cases II, and obtain the optimal $B_s^*$ with different values of $\frac{c_l}{c_s}$ as in Table \ref{table:eut}.
\end{proof}

The results in Table \ref{table:eut} also depend on the cost ratio $\frac{c_l}{c_s}$. When $\frac{c_l}{c_s} \leq \left(\sum_{i=1}^I p_i\alpha_i\right)^{-1}$, the leasing is cheap enough so that the operator will choose to lease only ($B_s^*=0$). For the case $\frac{c_{l}}{c_s}\geq \left(p_1\alpha_1\right)^{-1}$ (hence leasing is significantly more expensive), we have $B_s^*=\frac{D}{\alpha_1}$. The threshold value $\left(\sum_{i=1}^{j} p_i\alpha_i\right)^{-1}$ is based on the distribution of $\alpha$, as the expected ``effective cost'' of getting one unit of idle spectrum through sensing is $c_s\left(\sum_{i=1}^j p_i\alpha_i\right)^{-1}$.

%%%%%%%%%%%%%%%%%%%%%%%%%%%%%%%%%%%%%%%%%%%%%%%%%%%%%%%%%%%%%%%%%%%%%%%%%%%%%%%%%%%%%%%%%%%%%%
\subsection{Proof of Theorem 2}
From (12), when $B_s>D$, $U\left(B_s\right)$ is decreasing in $B_s$, so $B_s^* \in \left[0, D\right]$. Hence, we obtain
\begin{align}
U\left(B_s\right)&=-\lambda w\left(p_1\right)\left(B_sc_s\right)^\beta+\left(c_l-c_s\right)^\beta B_s^\beta w\left(p_2\right)\notag\\
&=\left[w\left(p_2\right)\left(c_l-c_s\right)^\beta-\lambda w\left(p_1\right){c_s}^\beta\right]B_s^\beta.
\end{align}
From (43), we find that $U\left(B_s\right)$ is a monotonic function of $B_s$. Hence, we can find the optimal sensing decision
        \begin{align}
				B_s^* &= \left\{
				\begin{aligned}
				0,\quad&\text{if } w\left(p_2\right)\left(c_l-c_s\right)^\beta<\lambda w\left(p_1\right){c_s}^\beta,\\
        D, \quad&\text{if } w\left(p_2\right)\left(c_l-c_s\right)^\beta\geq\lambda w\left(p_1\right){c_s}^\beta.\\
        \end{aligned}
				\right.
        \end{align}

%%%%%%%%%%%%%%%%%%%%%%%%%%%%%%%%%%%%%%%%%%%%%%%%%%%%%%%%%%%%%%%%%%%%%%%%%%%%%%%%%%%%%%%%%%%%%%
\subsection{Proof of Theorem 3}
From (13) and (14), when $B_s>D$, both $U_{RPH}\left(B_s\right)$ and $U_{RPL}\left(B_s\right)$ are decreasing in $B_s$, so $B_s^* \in \left[0, D\right]$. We prove Theorem 3 by capturing the unimodal structure of (13) and (14). For a unimodal problem, the optimal point is either at the unique local maximum point or the boundaries.
We first compute the first order derivatives of $U_{RPH}$ and $U_{RPL}$ with respect to $B_s$:
\begin{align}
\frac{\partial U_{R\!P\!H}\!\left(B_s\right)}{\partial B_s}=&-\lambda c_s w\left(p_1\right) \beta\left[B_sc_s+D\left(c_l-c_s\right)\right]^{\beta-1}\notag\\
&+\lambda w\left(p_2\right)\beta\left(c_l-c_s\right)^{\beta}\left[\left(D-B_s\right)\right]^{\beta-1},
\end{align}
and
\begin{align}
&\frac{\partial U_{R\!P\!L}\!\left(B_s\right)}{\partial B_s}=\!-c_s\beta\left(-B_sc_s+Dc_s\right)^{\beta-1}w\left(p_1\right)\notag\\
&~~+\!\beta\left(c_l\!-\!c_s\right)\left[B_s\left(c_l\!-\!c_s\right)\!+\!Dc_s\right]^{\beta-1}w\left(p_2\right).
\end{align}
Since the second order derivatives $\frac{\partial^2 U_{RPH}\left(B_s\right)}{\partial B_s^2}>0$ and $\frac{\partial^2 U_{RPL}\left(B_s\right)}{\partial B_s^2}<0$, the function $\frac{\partial U_{RPH}}{\partial B_s}$ is a strictly increasing function of $B_s$, and the function $\frac{\partial U_{RPL}\left(B_s\right)}{\partial B_s}$ is a strictly decreasing function of $B_s$, which means $\frac{\partial U_{RPH}\left(B_s\right)}{\partial B_s}$ and $\frac{\partial U_{RPL}\left(B_s\right)}{\partial B_s}$ will only be zero at most once, thus at most one local maximum point for both $U_{RPH}\left(B_s\right)$ and $U_{RPL}\left(B_s\right)$. 

We then consider the two boundary points (a) $B_s=0+\epsilon$ and (b) $B_s=D-\epsilon$, with $\epsilon$ being a small positive number approaching zero (i.e., $\epsilon \rightarrow 0$), to see if the optimal point is at the local maximum point or at the boundaries.

(a) When $B_s=0+\epsilon$, we have:
\begin{align}
\lim_{\epsilon\rightarrow0} U'_{R\!P\!H}\!\!\left(\epsilon\right)\!=\!\beta\lambda\left[D\!\left(c_l\!-\!c_s\right)\right]^{\beta-1}\!\left[w\!\left(p_2\right)\!\left(c_l\!-\!c_s\right)\!-\!w\!\left(p_1\right)\!c_s\right],
\end{align}
and
\begin{align}
\lim_{\epsilon\rightarrow0} U'_{R\!P\!L}\!\left(\epsilon\right)=\beta Dc_s^{\beta-1}\left[-w\left(p_1\right)c_s+\left(c_l-c_s\right)w\left(p_2\right)\right].
\end{align}
(b) When $B_s=D-\epsilon$, we have:
\begin{align}
\lim_{\epsilon\rightarrow0} U'_{RPH}\left(D-\epsilon\right)=\infty,
\end{align}
and
\begin{align}
\lim_{\epsilon\rightarrow0} U'_{RPL}\left(D-\epsilon\right)=-\infty.
\end{align}
We can obtain 
\begin{align}
\lim_{\epsilon\rightarrow0} U'_{RPH}\left(\epsilon\right)<0 \Leftrightarrow & -w\left(p_1\right)c_s+\left(c_l-c_s\right)w\left(p_2\right)<0 \notag\\
\Leftrightarrow&\lim_{\epsilon\rightarrow0} U_{RPL}\left(\epsilon\right)<0.
\end{align}
Since $U'_{RPH}\left(\epsilon\right)$ and $U'_{RPL}\left(\epsilon\right)$ have the same sign from (47) and (48), either $U'_{RPH}\left(B_s\right)$ is first negative then positive and $U'_{RPL}\left(B_s\right)$ is all negative within the range $\left(0,D\right)$, or $U'_{RPH}\left(B_s\right)$ is all positive and $U'_{RPL}\left(B_s\right)$ is first positive then negative within the range $\left(0,D\right)$.

Since $U\left(B_s\right)$ is continuous in $B_s\in\left[0,D\right]$, the optimal solution $B_s^*$ under the two reference points depends on the value of $c_l$ and $c_s$ as in Table \ref{table:rpp}.\\

\begin{IEEEbiography}
[{\includegraphics[width=1in,height=1.25in,clip,keepaspectratio]{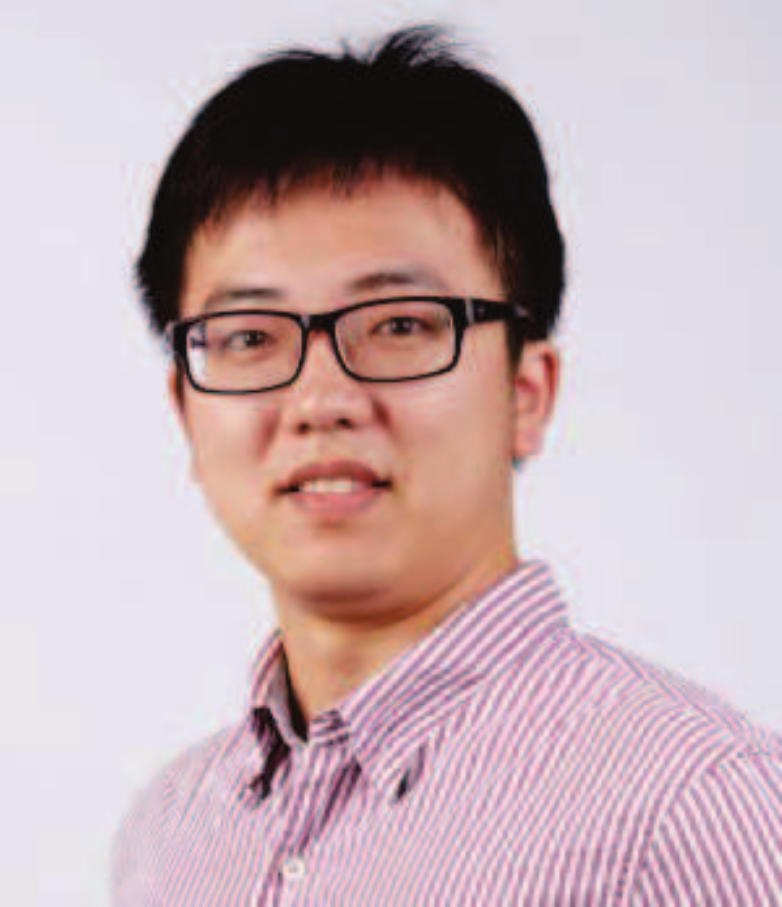}}]
{Junlin Yu} (S'14) is working towards his Ph.D. degree in the Department of Information Engineering at the Chinese University of Hong Kong. His research interests include behavioral economical studies in wireless communication networks, and optimization in mobile data trading. He is a student member of IEEE.
\end{IEEEbiography}

\begin{IEEEbiography}
[{\includegraphics[width=1in,height=1.25in,clip,keepaspectratio]{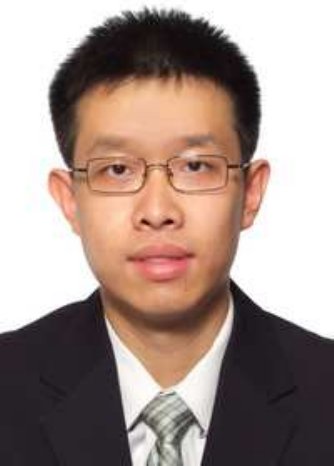}}]
{Man Hon Cheung} received the B.Eng. and M.Phil. degrees in Information Engineering from the Chinese University of Hong Kong (CUHK) in 2005 and 2007, respectively, and the Ph.D. degree in Electrical and Computer Engineering from the University of British Columbia (UBC) in 2012.
 Currently, he is a postdoctoral fellow in the Department of Information Engineering in CUHK.
 He received the IEEE Student Travel Grant for attending {\it IEEE ICC 2009}. He was awarded the Graduate Student International Research Mobility Award by UBC, and the Global Scholarship Programme for Research Excellence by CUHK.
 He serves as a Technical Program Committee member in {\it IEEE ICC}, {\it Globecom}, and {\it WCNC}.
 His research interests include the design and analysis of wireless network protocols using optimization theory, game theory, and dynamic programming, with current focus on mobile data offloading, mobile crowd sensing, and network economics.
\end{IEEEbiography}

\begin{IEEEbiography}[{\includegraphics[width=1in,height=1.25in,clip,keepaspectratio]{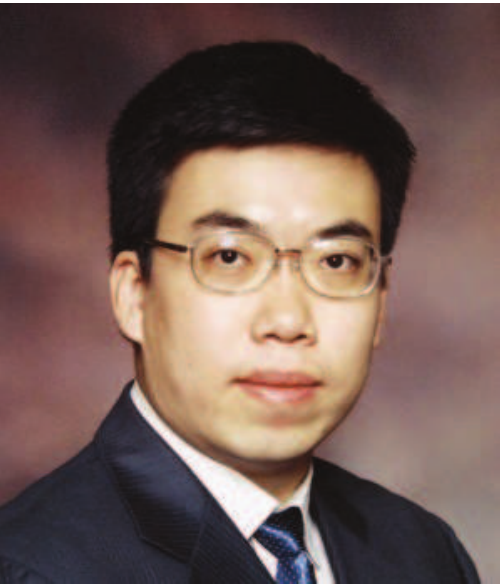}}]
{Jianwei Huang} (F'16) is an Associate Professor and Director of the Network Communications and Economics Lab (ncel.ie.cuhk.edu.hk), in the Department of Information Engineering at the Chinese University of Hong Kong. He received the Ph.D. degree from Northwestern University in 2005. He is the co-recipient of 8 Best Paper Awards, including IEEE Marconi Prize Paper Award in Wireless Communications in 2011. He has co-authored five books: \emph{Wireless Network Pricing}, \emph{Monotonic Optimization in Communication and Networking Systems}, \emph{Cognitive Mobile Virtual Network Operator Games}, \emph{Social Cognitive Radio Networks}, and \emph{Economics of Database-Assisted Spectrum Sharing}. He has served as an Associate Editor of IEEE Transactions on Cognitive Communications and Networking, IEEE Transactions on Wireless Communications, and IEEE Journal on Selected Areas in Communications - Cognitive Radio Series. He is the Vice Chair of IEEE ComSoc Cognitive Network Technical Committee and the Past Chair of IEEE ComSoc Multimedia Communications Technical Committee. He is a Fellow of the IEEE and a Distinguished Lecturer of IEEE Communications Society.
\end{IEEEbiography}

\end{document}